\documentclass[11pt,a4paper]{article}
\usepackage{amsmath, amssymb, amsthm, amsfonts, latexsym, graphicx, color, url,complexity}
\usepackage{authblk}


\newlength{\actualtopmargin}
\newlength{\actualsidemargin}
\setlength{\actualtopmargin}{2cm}
\setlength{\actualsidemargin}{2.0cm}
\setlength{\topmargin}{-1.0in}
  \addtolength{\topmargin}{-\headsep}
  \addtolength{\topmargin}{-\headheight}
  \addtolength{\topmargin}{\actualtopmargin}
\addtolength{\oddsidemargin}{-\evensidemargin}
  \setlength{\oddsidemargin}{0.35\oddsidemargin}
  \addtolength{\oddsidemargin}{\actualsidemargin} 
  \addtolength{\oddsidemargin}{-1.0in}
\setlength{\evensidemargin}{-\oddsidemargin}
  \addtolength{\evensidemargin}{2\actualsidemargin}
  \addtolength{\evensidemargin}{-2.0in}
\setlength{\textheight}{\paperheight}
  \addtolength{\textheight}{-2\actualtopmargin}
\setlength{\textwidth}{\paperwidth}
  \addtolength{\textwidth}{-2\actualsidemargin}
  \theoremstyle{plain}
  \newtheorem{theorem}{Theorem}
  
  \newtheorem{lemma}[theorem]{Lemma}
  \newtheorem{corollary}[theorem]{Corollary}

  \theoremstyle{definition}
  \newtheorem{definition}[theorem]{Definition}

  \theoremstyle{remark}
  
  \theoremstyle{plain}
  \newtheorem*{theorem*}{Theorem}
  \newtheorem*{lemma*}{Lemma}
  \newtheorem*{corollary*}{Corollary}
  \newtheorem*{proposition*}{Proposition}
  \newtheorem*{claim*}{Claim}
  \newtheorem*{problem*}{Problem}
  


\newcommand{\ii}{\mathbb{I}}

\renewcommand{\poly}{\textrm{poly}}

\newcommand{\prob}[1]{\Pr[ #1 ]}

\newcommand{\norm}[1]{\left\| #1 \right\|_2}
\newcommand{\opnorm}[1]{\left\| #1 \right\|}
\newcommand{\bra}[1]{\langle #1 \vert}

\newcommand{\ket}[1]{\vert #1 \rangle}

\newcommand{\braket}[2]{\langle #1 \vert #2 \rangle}

\newcommand{\half}{\frac{1}{2}}

\newcommand{\bigO}[1]{O\left(#1\right)}
\newcommand{\CC}{\mathbb{C}}
\newcommand{\RR}{\mathbb{R}}
\newcommand{\Herm}{\mathrm{Herm}}


\begin{document}
\title{\Large \textbf{
Shorter unentangled proofs for Ground State Connectivity.
}}
\author[1]{Libor Caha}
\affil[1,2]{Research Center for Quantum Information, Institute of Physics, Slovak Academy of Sciences, D\'ubravsk\'a cesta 9, 845 11 Bratislava, Slovakia}
\author[2]{Daniel Nagaj\footnote{daniel.nagaj@savba.sk}}
\author[3]{Martin Schwarz}
\affil[3]{Dahlem Center for Complex Quantum Systems, Freie Universität Berlin, 14195 Berlin, Germany}
\maketitle
\vspace{-5mm}


\begin{abstract}
	Can one considerably shorten a proof for a quantum problem by using a protocol with a constant number of unentangled provers?
		We consider a frustration-free variant of the $\QCMA$-complete Ground State Connectivity (GSCON) problem for a system of size $n$ with a proof of superlinear-size. 
		We show that we can shorten this proof in $\QMA(2)$: there exists a two-copy, unentangled proof with length of order $n$, up to logarithmic factors,
	while the completeness-soundness gap of the new protocol becomes a small inverse polynomial in $n$.
\end{abstract}

\section{Introduction: Unentangled Provers and Short Proofs}

While entanglement is essential for quantum algorithms, 
{\em unentanglement} can also be an interesting resource. 
In quantum complexity, such a guarantee
about a purported proof can significantly improve the power of a~verifier.
Blier and Tapp \cite{BlierTapp} discovered that two unentangled copies of a {\em short} witness of the type
\begin{align}
\frac{1}{\sqrt{n}}\sum_{i=1}^{n} \ket{i}\ket{c_i},
\label{color}
\end{align}
can be used to prove the existence of a solution for the NP-complete graph coloring problem. All one needs is $c_i$ to be the color of the vertex $i$ in the solution. Listing the color for each vertex would normally take space on the order of $n$, while
the two-copy, unentangled quantum proof takes space only $2+\log n$, as we need 2 qubits to encode the three possible colors.
We (the verifier) can check this proof as follows. First, 
let us measure $\ket{i}\ket{c_i}$ from one witness and $\ket{i'}\ket{c_{i'}}$ from the other witness. Sometimes, we get results for neighboring vertices $i,i'$, so we can check if $c_i \neq c_{i'}$, verifying the validity of the coloring. However, we also need to thwart cheating provers by a \textsc{swap} test \cite{QuantumFingerprinting} and a color-measuring test checking the consistency of the two copies of the witness and well defined vertex colors, and a another to make sure the superposition contains info about all vertices. Only when we are sure that the two witnesses are unentangled, these tests are sound, while an entangled state could easily fool the \textsc{swap} test.

The new quantum proofs are exponentially shorter, so one might think 
we could use a quantum computer to quickly find them (in \BQP). However, there is no straightforward way for this, e.g. using variants of Grover's search, as one needs to keep the proofs unentangled. Therefore, this result does not imply anything about the containment of $\NP$ in \BQP. On the other hand, it is connected to interesting questions about the the nonexistence of perfect disentanglers \cite{Beigi0810.5109}
or the strong-NP hardness of separability testing for density matrices \cite{Gurvits, PhDLiu, GharibianStrongNPHardness}.

The main price we pay for shortening the proof in \cite{BlierTapp} is that the completeness-soundness gap is small -- the probability of detecting cheating provers and thus the gap is $\Omega(n^{-6})$. However, there are also independent results that analyze the possible tradeoff between the proof length and the (completeness-soundness) gap.  The protocol of Aaronson et al. \cite{Beigi0810.5109} looks at the balanced 2-out-of-4-SAT problem, relies on Dinur's proof of the PCP theorem \cite{Dinur:2007:PTG:1236457.1236459}, and produces constant soundness and perfect completeness, while using $\tilde{O}\left(\sqrt{n}\right)$ unentangled copies of the proof. Also, instead of \eqref{color}, it uses a phase encoding $\ket{\psi} = \frac{1}{\sqrt{2^n}}\sum_{j} (-1)^{c_j}\ket{j}$ of the witness.
Next, Beigi \cite{Beigi2outof4SAT} also has a protocol for 2 provers sending $O(\log n)$ qubits, with gap $\Omega(n^{-3-\epsilon})$. Meanwhile, the product test of Harrow and Montanaro \cite{ProductTest} applied to \cite{Beigi0810.5109} has lead to a 2 prover protocol sending $\tilde{O}(\sqrt{n})$ qubits with a constant gap. 
Investigating unentanglement further, Chen and Drucker \cite{ChenDrucker} found a protocol for 2-out-of-4-SAT using unentangled measurements with $\tilde{O}(\sqrt{n})$ provers sending $O(\log n)$ qubits. Next, Le Gall, Nakagawa and Nishimura \cite{LGNN} gave an improved protocol for 3-SAT with only two log-size, unentangled quantum proofs and a $\Omega(1/n\, \textrm{polylog}(n))$ completeness-soundness gap.
Chiesa and Forbes \cite{cj13-01} provided a tighter soundness analysis leading to  $\Omega(n^{-2})$ completeness-soundness gap for \cite{BlierTapp} and a smooth trade-off between $\mathcal{K}$ provers and a gap $\Omega(\mathcal{K}^2 n^{-1})$ for \cite{ChenDrucker}.
A similar gap improvement for \cite{BlierTapp} was proved by Nishimura and Nakagawa in \cite{BlierTappSoundnessGapImprovement}.

These results mainly concern short proofs of {\em classical} problems. 
Inspired by them, we choose to look at a naturally {\em quantum} problem, Ground State Connectivity (GSCON), 
and ask whether we could rely on unentanglement to make its proof shorter.  
This is indeed what we find, for a particular $\QCMA$-complete variant of GSCON.
However, our result has two shortcomings. First, the shortening is significant only if the original proof is superlinear. Second, the completeness-soundness gap becomes very small. It should thus serve as a proof of principle that opens the door to other more effective unentanglement-based constructions of proof systems for quantum problems. 

We call for a general investigation of when and how much proofs for quantum complexity classes could be shortened, when relying on unentanglement. Note that the relationship of the class QMA(2) to classes without unenanglement is not fully understood yet. One of the things we know is that if the verifier could only perform one-way LOCC measurements on a constant number of unentangled proofs, his power would diminish, in particular $\QMA^{\text{LOCC}}_{\ell(n)}(2)_{c,s}\subseteq\QMA_{O(\ell^2(n)\epsilon^{-2}),c,s+\epsilon}$, as shown by Brand\~ao, Christandl and Yard \cite{BCY}. On the other hand, adding the unentanglement requirement doesn't allow one to freely shorten proofs of $\QMA$. Unless a subexponential-time quantum algorithm for 3-SAT exists, the size of a $\QMA$ witness cannot be shortened to less than its squareroot in $\QMA(2)$ with a constant completeness-soundness gap, i.e. $\QMA_{n}(2)\not\subseteq \QMA_{o(n^2)}$.

Let us now present our results. We start with a review of the GSCON problem in Section~\ref{sec:gscon},
and present a high-level view of our protocol and state the main theorem in Section~\ref{sec:shorter}. In Section~\ref{sec:tests} we give the details of the proof verification procedure, and prove our main result in Sections~\ref{sec:soundness} (soundness), \ref{sec:completeness} (completeness) and \ref{sec:csgap} (gap lower bound).


\section{The Ground space connectivity problem (GSCON)}
\label{sec:gscon}

Let us start  with the definition of the $\QCMA$-complete Ground State Connectivity (GSCON) problem \cite{gscon} about the possibility of traversal between two low-energy states for a local Hamiltonian, using local unitary transformations, while remaining in a low-energy sector.

\begin{definition}[The Ground State Connectivity (GSCCON) problem  
\cite{gscon}] 
\label{def:GSCON}
Ground state connectivity (GSCON) with parameters $H, n, k, R, \eta_1, \eta_2, \eta_3, \eta_4, \Delta, m, U_{\psi}, U_{\phi}$ is a promise problem defined as follows. Consider
\begin{enumerate}
\item a $k$-local Hamiltonian $H=\sum_i^R H_i$ acting on $n$ qubits with $R$ terms $H_i \in \Herm ((\CC^2)^{\otimes k})$ satisfying $||H_i||_{\infty} \leq 1$,
\item real numbers $\eta_1,\eta_2,\eta_3,\eta_4,\Delta\in \RR$, and an integer $m\geq0$, such that $\eta_2-\eta_1\geq\Delta$ and $\eta_4-\eta_3\geq\Delta$,
\item descriptions of polynomial size quantum circuits $U_{\psi}$ and $U_{\phi}$ generating the {\em starting} and {\em target} states $\ket{\psi}$
and $\ket{\phi}$ from the initial state $\ket{0}^{\otimes n}$, satisfying $\bra{\psi}H\ket{\psi}\leq\eta_1$ and $\bra{\phi}H\ket{\phi}\leq\eta_1$, respectively.
\end{enumerate}
Decide, which of the two cases is true:
\begin{enumerate}
\item[{\em YES}:] There exists a sequence of 1 and 2 qubit\footnote{In general, this could be also $l$-local unitaries, we choose $l=2$. This variant of the problem is still QCMA complete \cite{gscon}.} unitaries $\{U_i\}^m_{i=1}$ such that
\begin{enumerate}
\item intermediate states remain in low energy space, i.e. for all $i\in[m]$ and intermediate states $\ket{\psi_i} := U_i\ldots U_2U_1\ket{\psi}$, one has $\bra{\psi_i}H\ket{\psi_i}\leq \eta_1$, and
\item the final state is close to the target state, i.e. $\|U_m\ldots U_1\ket{\psi}-\ket{\phi}\|_2\leq\eta_3$.
\end{enumerate}
\item[{\em NO}:] For all 1 and 2 qubit sequences of unitaries $\{U_i\}^m_{i=1}$, either
\begin{enumerate}
\item some intermediate state has a high energy, i.e. there exists an $i\in[m]$, for which the intermediate state $\ket{\psi_i}:=U_i\ldots U_2U_1\ket{\psi}$, obeys $\bra{\psi_i}H\ket{\psi_i}\geq\eta_2$, or
\item the final state is far from the target state, i.e. $\|U_m\ldots U_1\ket{\psi}-\ket{\phi}\|_2\geq\eta_4$.
\end{enumerate}
\end{enumerate}
\end{definition}

In this paper, we consider a specific version that we call {\em frustration-free} GSCON. It requires an at least inverse-polynomial promise gap $\Delta=\Omega(1/\poly(n))$, and a positive semidefinite, frustration-free Hamiltonian, with $\eta_1=0$. 
We choose this for a technical reason, as we are presently unable to devise a strong enough low-energy testing procedure for the witnesses. However, this variant of GSCON is still $\QCMA$ complete.

We know that in general, GSCON (deciding whether a low-energy state $\ket{\psi}$ can be transformed to a low-energy state $\ket{\phi}$ using a sequence of $m=\poly(n)$ (2-)local gates, while remaining a low-energy state) is a QCMA complete problem.
The {\em frustration-free} GSCON variant still belongs to $\QCMA$, as the local transformations can be easily communicated classically, and their properties tested on a quantum computer. On the other hand, it is $\QCMA_1$ hard, as it also has instances that can be constructed (as in \cite{gscon}) for a Hamiltonian related to the verification procedure for a $\QCMA_1$ proof -- with perfect completeness. However, thanks to $\QCMA=\QCMA_1$ \cite{JordanNagajNishimura}, this must also be $\QCMA$ hard. Therefore, frustration-free GSCON is also $\QCMA$ complete.

We assume the circuits $U_\psi$ and $U_\phi$ are given in terms of 1 and 2-qubit unitary gates. All input parameters are specified with rational entries, each using $O(\poly(n))$ bits of precision.
We expect the same for the gates $U_i$ that are chosen out of $G =\textrm{poly}(n)$ possible gates (including the target qubit specification), encoded as bit strings of length at most $O(\log n)$, with polynomial-precision entries.

The standard proof for GSCON is the list of unitary transformations that generate the low-energy states traversing from $\ket{\psi}$ to $\ket{\phi}$. In the next Section, we devise a different type of proof involving superpositions.


\section{Shorter proofs for Ground State Connectivity relying on unentanglement.}
\label{sec:shorter}


\subsection{A shorter proof: the sequence of states in superposition}
\label{sec:superpos}

The original proof has size $m \log G = O(m \log(n))$, as it holds the information about the $m$ gates $U_i$ applied to the initial state (each $U_i$ is a 1 or 2 qubit unitary gate chosen from a set of size $G$, including the target qubits specification).
We want to shorten it to 
\begin{align}
	(\log m) \times \left(n+\log G\right) = O(n \log n) ,
\end{align}
at the cost of a smaller completeness-soundness gap, and asking for four unentangled proofs. Later in Corollary~\ref{cor:qma2} we show that only two unentangled proofs suffice.

\begin{figure}[!h]
\begin{center}
	\includegraphics[width=8cm]{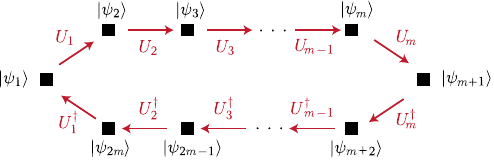}
	\caption{A cycle of states $\ket{\psi_1}$, \dots, $\ket{\psi_{m+1}}$, \dots, $\ket{\psi_{2m}}$, connected via the unitaries $U_1, \dots, U_m, U_m^\dagger,\dots, U_1^\dagger$.
	\label{fig:cycle}
}
\end{center}
\end{figure} 

We ask for {\em two unentangled copies} of the 
two-register ({\em label} and {\em gate})
state
\begin{align}
\ket{U} =\frac{1}{\sqrt{2m}}\sum_{i=1}^{2m} \ket{i} \ket{u_i},
\label{GSCONproofU}
\end{align}
encoding a cycle of local transformations as in Figure~\ref{fig:cycle}, with
each $u_i$ a classical string decribing the gate $U_i$ (chosen from a gate set of size $G = \textrm{poly}(n)$, including which qubits it acts on). 

We also ask for {\em two unentangled copies} of the 
two-register ({\em label} and {\em data})
state
\begin{align}
	\ket{S} =\frac{1}{\sqrt{2m}}\sum_{i=1}^{2m} \ket{i} \ket{\psi_i},
 \label{GSCONproof}
\end{align}
encoding a cyclical sequence of labeled low-energy states $\ket{\psi_i}$,
illustrated in Figure~\ref{fig:cycle}.
The sequence should start with the initial state $\ket{1}\ket{\psi_1} = \ket{1}\ket{\psi}$ for $\ket{\psi}$ from the definition of GSCON, and obey $U_i \ket{\psi_{i}} = \ket{\psi_{i+1}}$, with $U_{2m} \ket{\psi_{2m}} = \ket{\psi_{1}}$ at the end. 
The first half of the sequence corresponds to the traversal from $\ket{\psi}$ to $\ket{\phi}$ using the gates $U_i$. The second half should be its inverse, with $U_{m+i}= U_{m+1-i}^\dagger$, so that $U_{2m} \dots U_1 = \ii$. 

Observe that such a state $\ket{S}$ is invariant under the action of the unitary 
\begin{align}
	W = \sum_{i=1}^{2m} \ket{i+1}\bra{i}\otimes U_i, \label{Wunitary}
\end{align}
where we identify $\ket{2m+1} \equiv \ket{1}$ in the first register,
and assume $U_{m+i}= U_{m+1-i}^\dagger$ for $i=1,\dots,m$.

\subsection{The main result}
\label{sec:theorem}

Our main, superlinear proof-shortening result for frustration-free GSCON is the following Theorem:
 
\begin{theorem}[Shorter proofs for ff-GSCON in QMA($4$)]
\label{th:main}
Consider an instance of {\em Frustration-free GSCON} (ff-GSCON)
combining 
Definition~\ref{def:GSCON} with the extra assumptions of a positive-semidefinite, frustration-free Hamiltonian acting on $n$ qubits, 
with parameter $\eta_1 =0$ and an inverse-polynomial promise gap $\Delta$.
This promise problem has a proof system in $\QMA(4)$, with four unentangled proofs of length $\bigO{n \log n}$,
and an inverse polynomial\footnote{This inverse polynomial is quite small, as shown in Section~\ref{sec:csgap}: $c'-s' = \Omega\left(
\Delta^{13} m^{-32} G^{-10}
\right)$, with $\Delta$
from the definition of GSCON and $G$ the gate set size.} completeness-soundness gap. 
\end{theorem}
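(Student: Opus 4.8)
The plan is to build the $\QMA(4)$ verifier around the four unentangled proofs: two copies of $\ket{U}$ (the gate-sequence state of Eq.~\eqref{GSCONproofU}) and two copies of $\ket{S}$ (the state-sequence state of Eq.~\eqref{GSCONproof}). The verifier should pick one of several subtests at random. First, a family of \textsc{swap} tests and consistency checks in the spirit of Blier--Tapp: a \textsc{swap} test between the two copies of $\ket{U}$, and one between the two copies of $\ket{S}$, forces each pair to be (close to) identical; measuring the label register $\ket{i}$ in one copy and comparing with the other copy's label enforces that the states are supported ``diagonally'', i.e.\ that each label $i$ carries a well-defined gate $u_i$ and a well-defined data block $\psi_i$. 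A uniformity test (measuring the label register and checking it is close to uniform on $[2m]$, using that the honest state is an equal superposition) ensures all $2m$ labels genuinely appear, so a cheating prover cannot hide a bad link by simply omitting it.

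Second, the heart of the protocol is a \emph{propagation} or \emph{consistency-of-$W$} test that ties $\ket{U}$ and $\ket{S}$ together: using one copy of $\ket{S}$ and one copy of $\ket{U}$, the verifier measures a label $i$, extracts the gate description $u_i$, applies the corresponding unitary $U_i$ to the data register, and then checks (via a \textsc{swap}-type or Hadamard test against the \emph{other} copy of $\ket{S}$, after shifting the label to $i+1$) that $U_i\ket{\psi_i}\approx\ket{\psi_{i+1}}$, with the cyclic identification $\ket{2m+1}\equiv\ket{1}$ and the mirror condition $U_{m+i}=U_{m+1-i}^\dagger$. This is exactly the statement that $\ket{S}$ is invariant under $W$ from Eq.~\eqref{Wunitary}; passing it with high probability should force $\ket{S}$ to be $\epsilon$-close to a genuine $W$-eigenvector, hence (together with the boundary test below) to an honest cycle of states. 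Third, a boundary test checks $\ket{1}\ket{\psi_1}\approx\ket{1}\ket{\psi}$ (apply $U_\psi^\dagger$ to the data register conditioned on label $1$ and test against $\ket{0}^{\otimes n}$) and, using the mirror structure, that the midpoint state $\ket{\psi_{m+1}}$ is $\eta_3$-close to $\ket{\phi}$. Finally, an energy test: sample a label $i$ and a Hamiltonian term $H_j$, and measure $\bra{\psi_i}H_j\ket{\psi_i}$; since $H$ is frustration-free, positive-semidefinite and $\eta_1=0$, each term must be (near) zero on every honest $\ket{\psi_i}$, and the inverse-polynomial gap $\Delta$ makes rejection of high-energy intermediate states detectable.

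For completeness, I would exhibit the honest proofs directly from a YES instance of ff-GSCON: take the guaranteed unitary sequence $\{U_i\}_{i=1}^m$, set $U_{m+i}=U_{m+1-i}^\dagger$, define $\ket{\psi_i}=U_{i-1}\cdots U_1\ket{\psi}$, and let $\ket{U},\ket{S}$ be the corresponding equal superpositions; every test above passes with probability $1$ up to the intrinsic $\eta_3$-closeness and the (tiny) error of the Hadamard/\textsc{swap} subroutines, giving completeness $c'$ close to $1$. For soundness I would argue the contrapositive: if all four tests pass with probability $\ge c'-s'$, then the \textsc{swap}/consistency/uniformity tests pin the proofs to the honest diagonal form up to some error $\delta_1$; the $W$-invariance test upgrades this to $\ket{S}$ being $\delta_2$-close to a real propagated cycle; the energy test forces all intermediate states to have energy $<\eta_2$; and the boundary tests force the endpoints to match $\ket{\psi}$ and be $\eta_4$-far-avoiding, i.e.\ $\eta_3$-close to $\ket{\phi}$ --- contradicting the NO promise. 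Each implication loses polynomial factors in $\delta$, which is where the stated gap $\Omega(\Delta^{13}m^{-32}G^{-10})$ comes from. The length bound is immediate: $\ket{U}$ uses $\log(2m)+O(\log G)=O(\log n)$ qubits, $\ket{S}$ uses $\log(2m)+n=O(n)$ qubits, four copies total $O(n\log n)$ after accounting for $m=\poly(n)$.

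The main obstacle, and the step I expect to consume most of the work, is the soundness analysis of the $W$-invariance/propagation test \emph{without} assuming the provers are honest: a cheating prover may send an $\ket{S}$ that is entangled across the label/data split in subtle ways, or correlate errors between consecutive links so that each local check passes with good probability while no global honest cycle exists. Controlling this requires carefully combining the unentanglement guarantee (two independent copies, so a product-test/Blier--Tapp-style argument applies to each pair), a hybrid argument over the $2m$ links of the cycle (error accumulates linearly, costing an $m$ factor), and the fact that $W$ has a nondegenerate-enough spectral gap on the relevant subspace to convert ``approximately $W$-invariant'' into ``approximately a genuine eigenvector''. The frustration-free, $\eta_1=0$ assumption is precisely what rescues the energy-testing step from the same difficulty, since it turns a global low-energy condition into $R$ independent local zero-energy conditions that can each be checked with constant-size measurements; this is why the theorem is stated for ff-GSCON rather than general GSCON.
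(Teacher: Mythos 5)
Your overall architecture matches the paper's: the same four witnesses, the same families of tests (\textsc{swap} consistency, uniqueness of gates, uniformity, a $W$-propagation test, boundary tests, a term-sampling energy test), the same honest-prover completeness argument, and the same reason for restricting to the frustration-free case. However, there is a genuine gap at the two most delicate points, and they are coupled. First, your uniformity certification for $\ket{S}$ does not exist as described: ``measuring the label register and checking it is close to uniform on $[2m]$'' cannot be done from one or two copies of $\sum_i a_i\ket{i}\ket{\psi_i}$ --- a single computational-basis sample says nothing about the distribution, and the Fourier-basis projection onto $\ket{\bar 0}$ (which is how the Blier--Tapp-style \textsc{uniform} test actually works, and which you would need here) is unavailable for $\ket{S}$ because the label register is entangled with $2m$ \emph{unknown} $n$-qubit data states that you cannot first project onto anything known. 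The paper explicitly notes this obstruction; the near-uniformity of the $a_i$ must instead be \emph{derived} from the propagation test itself, via a chaining argument $|a_{j+1}|\geq|a_j|(1-O(\sqrt{G/(tm)}))-2\sqrt{z}$ iterated around the cycle starting from the pigeonhole-guaranteed $|a_k|\geq 1/\sqrt{2m}$ (Lemma~\ref{lem:TS}). Without that, a cheating prover concentrates all amplitude of $\ket{S}$ and $\ket{S'}$ on a few labels carrying valid links, the bad links are never sampled, label $m+1$ never appears so the \textsc{end} test is vacuous, and soundness fails outright.

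Second, your propagation test as written --- measure a label $i$, apply $U_i$, shift, then \textsc{swap}-test against the \emph{full} other copy $\ket{S'}$ --- breaks completeness: after the measurement the state is $\ket{i+1}U_i\ket{\psi_i}$, whose overlap with the honest $\ket{S'}$ has magnitude only $1/\sqrt{2m}$, so the \textsc{swap} test rejects honest provers with probability close to $\tfrac12$. (The variant where you also measure the label of $\ket{S'}$ and condition on getting $i{+}1$ restores completeness but checks each link only with probability $|a_i|^2|a'_{i+1}|^2$, which returns you to the uniformity problem above.) The test must be performed \emph{coherently}: apply the gates controlled on the gate register of $\ket{U}$, project that register onto the uniform superposition $\ket{\bar g}$, project onto matching labels, uncompute, cyclically shift, and only then \textsc{swap}-test the resulting $\ket{T'}$ against $\ket{S'}$ --- so that the honest $\ket{T'}$ \emph{equals} $\ket{S'}$ and the test passes with probability one conditioned on the (probability $\geq \tfrac{1}{8mG}$) projections succeeding. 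Your appeal to a spectral gap of $W$ to convert approximate invariance into closeness to an eigenvector is a plausible alternative analysis, but you neither implement a sound approximate-$W$-invariance test nor account for the fact that the relevant gap is only $\Theta(1/m)$ and the fixed space is $2^n$-dimensional; the paper avoids this entirely by the direct amplitude-and-link chaining argument.
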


We present the protocol in Section \ref{sec:tests} and analyze it in detail in Sections \ref{sec:soundness} and \ref{sec:completeness}, proving Theorem~\ref{th:main}. Let us now show how to use this 4 unentangled witness protocol as a black box to build a procedure with only 2 witnesses, putting frustration-free GSCON into $\QMA(2)$ with shortened proofs.

\begin{corollary}
	Ff-GSCON is in $\QMA_{\bigO{n\log n}}(2)$ with an inverse polynomial completeness-soundness gap.
	\label{cor:qma2}
\end{corollary}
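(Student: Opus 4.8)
The plan is to simulate the four-prover verifier $V$ of Theorem~\ref{th:main} with only two provers, by the standard device of merging proofs and then recovering the lost unentanglement guarantees with a product (\textsc{swap}) test, exactly as in the $\QMA(k)\subseteq\QMA(2)$ reductions of Harrow and Montanaro~\cite{ProductTest}. Recall that the four proofs of $V$ are two copies of the gate state $\ket U$ and two copies of the data state $\ket S$, each of length $\bigO{n\log n}$. In the new protocol I would ask each of the two provers to send the pair $\ket U\otimes\ket S$ on two registers, so that each message is still $\bigO{n\log n}$ qubits; call the four received registers $U_1,S_1$ (from prover~$1$) and $U_2,S_2$ (from prover~$2$). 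The verifier then flips a biased coin: with probability $1-\lambda$ it runs the \emph{product test}, namely a \textsc{swap} test between $U_1$ and $U_2$ together with a \textsc{swap} test between $S_1$ and $S_2$, accepting iff both pass; with probability $\lambda$ it runs $V$ on $(U_1,U_2,S_1,S_2)$, with $U_1,U_2$ in the roles of the two copies of $\ket U$ and $S_1,S_2$ in the roles of the two copies of $\ket S$. The weight $\lambda=\Theta(c'-s')$ is fixed at the end; both branches are efficiently implementable, and the proof length and prover count are as claimed in the corollary.

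Completeness is immediate: on a YES instance the honest provers each send the honest pair $\ket U\otimes\ket S$ of Theorem~\ref{th:main}, so $U_1,U_2$ (and $S_1,S_2$) carry the same pure state, both \textsc{swap} tests pass with certainty, and $V$ accepts with probability at least $c'$; the overall acceptance probability is at least $(1-\lambda)+\lambda c'$. For soundness I would fix an arbitrary unentangled strategy $\rho_1\otimes\rho_2$ on a NO instance and let $1-\delta$ be its probability of passing the product test. Since the provers are unentangled the reduced state on $(U_1,U_2)$ is the product $\rho_{U_1}\otimes\rho_{U_2}$, so the $U$-\textsc{swap} test alone passes with probability at least $1-\delta$, forcing $\tr{\rho_{U_1}\rho_{U_2}}\geq 1-2\delta$ and hence, by Cauchy--Schwarz, purity $\tr{\rho_{U_1}^2}\geq 1-\bigO{\delta}$; likewise for $S_1$. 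A bipartite state whose two marginals are $\bigO{\delta}$-close to pure is $\bigO{\sqrt\delta}$-close in trace distance to the corresponding pure product state, so $\rho_1$, and by the same argument $\rho_2$, and therefore the joint state $\rho_1\otimes\rho_2$, is $\bigO{\sqrt\delta}$-close to a state of the form $(\ketbra a)\otimes(\ketbra b)\otimes(\ketbra c)\otimes(\ketbra d)$ — a legitimate four-unentangled-proof state. Because $\Pr[V\text{ accepts}]$ is linear in its input and $V$ is $s'$-sound against \emph{every} product four-tuple on NO instances, $V$ accepts $\rho_1\otimes\rho_2$ with probability at most $s'+\bigO{\sqrt\delta}$, so the new protocol accepts with probability at most $(1-\lambda)(1-\delta)+\lambda\bigl(s'+\bigO{\sqrt\delta}\bigr)$.

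Subtracting the two estimates, the completeness--soundness gap of the new protocol is at least $\lambda(c'-s')+(1-\lambda)\delta-\bigO{\lambda\sqrt\delta}$ for every $\delta\in[0,1]$; minimizing the right-hand side over $\delta$ and then taking $\lambda=\Theta(c'-s')$ makes it $\Omega\bigl((c'-s')^2\bigr)$, which is $\Omega(1/\poly(n))$ by Theorem~\ref{th:main}, and $\lambda$ can be written with $\poly(n)$ bits just like $c'$ and $s'$. The step I expect to be the crux is this last estimate: the product test certifies closeness to a product state only up to an error that scales like the \emph{square root} of the soundness deficit $\delta$ (this $\sqrt\delta$ loss is unavoidable in general — a pure state with Schmidt coefficients $\sqrt{1-\delta},\sqrt\delta$ has near-pure marginals yet is $\Theta(\sqrt\delta)$-far from every product state), so a naive equal mixture of the two branches would leave no gap at all; running $V$ with the deliberately small probability $\lambda=\Theta(c'-s')$ is exactly what absorbs this loss, at the price of squaring the (already inverse-polynomial) gap. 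The quantitative ``passes the product test $\Rightarrow$ close to a product state'' statement with its explicit constants I would import from~\cite{ProductTest} rather than reprove, together with the elementary robustness facts about \textsc{swap} tests used above.
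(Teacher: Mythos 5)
Your proof is correct, and it reaches the same final shape as the paper's ($\Omega((c'-s')^2)$ gap via a biased coin between a product-certification branch and the $\QMA(4)$ verifier), but the product-certification component is genuinely different. The paper follows the generic route of Harrow and Montanaro: each of the two provers sends the full four-register state $\ket{U}\otimes\ket{U'}\otimes\ket{S}\otimes\ket{S'}$, and the verifier runs the \textsc{product} test on the two copies to certify that one copy is close to a $4$-fold product, then invokes (a reweighted version of) Lemma 5 of \cite{ProductTest}. You instead exploit the specific $2{+}2$ structure of the four witnesses: each prover sends only $\ket{U}\otimes\ket{S}$, the two needed copies of each witness type come from \emph{different} provers (so their unentanglement is free from the $\QMA(2)$ promise), and the only thing left to certify is that each prover's own state is a product across its $U$ and $S$ registers. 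Your cross-prover \textsc{swap} tests do exactly that via the chain ``\textsc{swap} passes $\Rightarrow$ $\tr{\rho_{U_1}\rho_{U_2}}$ large $\Rightarrow$ marginals nearly pure $\Rightarrow$ each prover's state is $\bigO{\sqrt{\delta}}$-close to a pure product,'' which is elementary and avoids importing the full product-test soundness analysis (you only need the trivial completeness direction of a \textsc{swap} test plus Cauchy--Schwarz and gentle measurement). The trade-off is that your reduction is protocol-specific --- it relies on the honest four witnesses coming in two identical pairs so that the cross-prover \textsc{swap} tests pass with certainty in the YES case --- whereas the paper's argument is a black-box $\QMA(k)\to\QMA(2)$ reduction. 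You also correctly identified and resolved the one genuine subtlety, namely that the $\sqrt{\delta}$ loss in ``passes the test $\Rightarrow$ close to product'' would kill the gap under an even mixture, and that weighting the verifier branch by $\lambda=\Theta(c'-s')$ absorbs it at the cost of squaring the gap; this is the same tuning trick the paper performs with its probability $p$.
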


\begin{proof}
Our protocol from Section~\ref{sec:tests} uses 4 unentangled witnesses -- two copies of the state $\ket{U}$ and two copies of the state $\ket{S}$. We know how to use the \textit{$\QMA(k)$ to $\QMA(2)$ transformation} \cite{ProductTest} to place it in $\QMA(2)$ with the same asymptotic witness length and altered completeness and soundness. The new $\QMA(2)$ protocol asks for two identical witnesses -- in our case two copies of the state $\ket{U}\otimes\ket{U'}\otimes\ket{S}\otimes\ket{S'}$.
The verifier performs two tests with the same probability: a) The \textsc{product} test, or the b) the original $\QMA(k)$ protocol on one of the states. In \cite{ProductTest}, the authors showed the containment $\QMA_w(k)_{c',s'}\subseteq\QMA_{kw}(2)_{c'',s''}$, with completeness $c''=\frac{1+c'}{2}$, soundness $s''=1-\frac{(1-s')^2}{100}$, and new witness size $kw$. However, for the resulting completeness-soundness gap to be positive, there is a requirement on the original completeness and soundness, which  
our $\QMA(4)$ protocol might not fulfill.

However, this is not a problem. The trick is to use the $\QMA(k)$ to $\QMA(2)$ conversion with variable probabilities to run the tests a) and b). 
Let us label $c'$ the completeness and $s'$ the soundness of Test b), the $\QMA(k)$ protocol, and denote $p$  the probability to run Test a) and $1-p$ the probability to run Test b). 
Following the proof of Lemma 5 \cite{ProductTest}, we find that the resulting  $\QMA(2)$ protocol has completeness and soundness:
\begin{align}
	c''&=p+(1-p) c', \\
	s''&\leq\max_{\epsilon\leq \frac{512}{11}\delta_{\textrm{P}}}
		\left\{ p\left(1-\delta_{\textrm{P}}\right)
			+ (1-p)\min\{1,s'+\sqrt{\epsilon}\}\right\}, \label{sdd}
\end{align}
with 
$\epsilon$ a bound on how far the witness is from a product state,
and $\delta_{\textrm{P}}\geq\frac{11}{512}\epsilon$
the probability that the \textsc{product} test rejects it.
The maximum in \eqref{sdd} is achieved for $\sqrt{\epsilon}=1-s'=\sqrt{\frac{512}{11}\delta_{\textrm{P}}}$. Therefore, $s'' \leq 1-p\delta_{\textrm{P}} \leq 1-p(1-s')^2\frac{11}{512}$.
With this in hand, we realize that we can always tune $p$ to create a protocol with a positive, inverse-polynomial completeness-soundness gap. 
For example, we can achieve $c''-s''\geq 
\frac{11}{512}p(1-s')^2 - (1-p)(1-c')
\geq \frac{1}{50}(c'-s')^2$, by choosing
\begin{align}
	p=\frac{1-c'+ \frac{1}{50}(c'-s')^2}{1-c'+\frac{11}{512}(1-s')^2 }.
\end{align}
Observe that $0\leq p \leq 1$, as $c'-s' \leq 1-s'$.

Therefore, there exists a way to tune the probability $p$ for running the \textsc{product} test vs. the $\QMA(4)$-based composite procedure from Section~\ref{sec:tests},
giving us a $\QMA(2)$ protocol for ff-GSCON, with shorter proofs of size $O(n \log n)$, and a completeness-soundness gap inverse polynomial in $n$.
\end{proof}

\section{Proof of Theorem~\ref{th:main}.}

The proof of Theorem~\ref{th:main} is spread over four Sections. We first describe the proof system in Section~\ref{sec:tests}, show its soundness in Section~\ref{sec:soundness} and completeness in Section~\ref{sec:completeness},
and prove that the completeness-soundness gap is an inverse polynomial in $n$ in Section~\ref{sec:csgap}.

\subsection{The verification procedure}
\label{sec:tests}

Let us start the proof of Theorem~\ref{th:main} with the tests that we must run on the 4 unentangled proofs for GSCON.
Note that in Corollary~\ref{cor:qma2} we have shown
how to get away with only 2 unentangled witness states instead of 4, relying on an argument similar to the \textsc{product} test of Harrow and Montanaro \cite{ProductTest}, 
while decreasing the completeness-soundness gap (but still to an inverse polynomial in $n$.

The verifier asks the provers to provide two unentangled copies of the states $\ket{U}$ \eqref{GSCONproofU} and $\ket{S}$ \eqref{GSCONproof}, as described in Section~\ref{sec:superpos}. From now on, let us call these $\ket{U},\ket{U'},\ket{S},\ket{S'}$.
With probabilities 
\begin{align}
	p_i = \frac{r_i^{-1}}{\sum_{j}r_j^{-1}}, \qquad i=1,\dots,8, \label{pchoice}
\end{align}
where $r_i$ are listed in Figure~\ref{tab:prob}, 
the verifier randomly chooses to do perform one of the following set of eight tests, accepting if the test succeeds. 
We choose the threshold parameters $r_i$ and test probabilities $p_i$ in such a way that in the {\em NO} case of the ff-GSCON instance, it must be true at least one of the tests rejects with probability more than its $r_i$, so the verifier accepts the proof with probability at most 
\begin{align}
	s' \leq 1- r_i p_i = 1- \frac{1}{\sum_j r_j^{-1}}, \label{sfromr}
\end{align}
independent of $i$. 
On the other hand, in the {\em YES} case, we will show that this results in completeness $c'$ that is at least an inverse polynomial in $n$ above $s'$, as stated in Theorem~\ref{th:main}. Here are the tests:

\begin{enumerate}
\item (\textsc{swap~U}) Do a \textsc{swap} test on the unitary-encoding witnesses
$\ket{U}$ and $\ket{U'}$ and reject on failure. 
This test checks basic consistency between the witnesses.

\item (\textsc{unique}) Measure the label and gate register of the states $\ket{U}$
and $\ket{U'}$ in the computational basis. 
If the labels don't match, accept. On the other hand, if you obtain the same label from both copies, check if the gate register measurement results match. Reject if they don't. 
Also reject if the results do not encode unitaries from the expected gate set.
This test checks if the unitaries $U_i$ are well defined.

\item (\textsc{uniform}) Do a projective measurement on the gate register of the state $\ket{U}$ and accept if the result is not the uniform superposition $\ket{\bar{g}}$ of all possible gate-encoding states. Proceed otherwise
and measure the label register. Reject if the result is not
the uniform superposition $\ket{\bar{0}}$. Together with \textsc{swap} and \textsc{unique}, this test checks if the terms $\ket{i}\ket{u_i}$ for various $i$ are nearly uniformly present in $\ket{U}$.

\item (\textsc{swap~S}) Do a \textsc{swap} test on the states $\ket{S}$ and $\ket{S'}$ (the state sequence-encoding witnesses) and reject on failure.
This test checks basic consistency between the witnesses. 

\item (\textsc{sequence}) 
First, apply the unitary $W$ from \eqref{Wunitary} to $\ket{S}$ in a probabilistic fashion, consuming the state $\ket{U}$ in the process. Second, compare $W\ket{S}$ and $\ket{S'}$ using a \textsc{swap} test, rejecting on failure. In detail,

\begin{enumerate}
\item Combine $\ket{U}$ and $\ket{S}$, and apply the encoded unitaries (we assume the states $\ket{u_i}$ are computational basis states) from the gate register to the data register to form the state
$	\sum_{i} \frac{1}{\sqrt{2m}}\ket{i}\ket{u_i} \sum_j  \frac{1}{\sqrt{2m}} \ket{j} U_i\ket{\psi_j}$.
\item Project the gate register onto the uniform superposition state $\ket{\bar{g}}$.
Accept if the projection fails, and proceed otherwise. 
\item
At this point, we expect to work with the renormalized state 
$\sum_{i}  \frac{1}{\sqrt{2m}} \ket{i}\ket{\bar{g}} \sum_{j} \frac{1}{\sqrt{2m}} \ket{j} U_i \ket{\psi_j}. \label{Tgate}
$
 Drop the gate-encoding register with the state $\ket{\bar{g}}$.
\item Project onto identical label registers.
Accept if the projection fails, and continue otherwise. 
\item At this point, we expect to work with the renormalized state 
$\sum_i \frac{1}{\sqrt{2m}}\ket{i} \ket{i} U_i \ket{\psi_i}$. Uncompute and drop the second label register. 
\item Shift the label register by 1 in a cyclical fashion, with $2m$ becoming $1$, to obtain the state $\ket{T'}$.

\item 
Do a \textsc{swap} test between $\ket{T'}$ and $\ket{S'}$ and {\em reject} on failure. 
Note that for honest provers we expect $\ket{T'}=\frac{1}{\sqrt{2m}}\sum_{i} \ket{i+1}U_i\ket{\psi_i}$, identical to the state $\ket{S'}$.
\end{enumerate}

This test checks if all states $\ket{i}\ket{\psi_i}$ in $\ket{S}$ have significant amplitudes, and whether $\ket{\psi_{i+1}} = U_i \ket{\psi_i}$.

\item (\textsc{start}) Check if the sequence in $\ket{S}$ starts with the state $\ket{\psi}$ from the problem instance as follows:
\begin{enumerate}
	\item Measure the label register of $\ket{S}$. Accept if it is not $1$, otherwise continue.
	\item If the label is $1$, use another register to prepare $\ket{\psi}$, according to the problem instance.
	\item Perform a \textsc{swap} test between the data register of $\ket{S}$ and the prepared state.
	Reject on failure.
\end{enumerate}

\item (\textsc{end}) Check if the sequence in $\ket{S}$ ends near the state $\ket{\phi}$ from the problem instance as follows:
\begin{enumerate}
	\item Measure the label register of $\ket{S}$. Accept if it is not $m+1$, otherwise continue.
	\item In another register, prepare $\ket{\phi}$, according to the problem instance.
	\item Perform a \textsc{swap} test between the data register of $\ket{S}$ and the prepared state.
	Reject on failure.
\end{enumerate}

\item (\textsc{low}) 
Measure the label register of $\ket{S}$, and then
the energy of its data register.
Reject, if 
the energy is higher than $\frac{\eta_2}{2}$
with $\eta_2$ 
from the definition of GSCON.

This test checks if the traversed sequence of states is made only from low-energy states. 
Note that a weakness of this test makes us talk about ff-GSCON, i.e. a GSCON instance with $\eta_1=0$, involving a frustration-free, positive semidefinite Hamiltonian.
\end{enumerate}

Choosing one of the tests at random gives us a reasonable assurance that the state $\ket{U}$ contains a nearly uniform superposition of the sequence of labeled, computational-basis encoded unitaries, applying these unitaries to the state $\ket{S}$ doesn't change it, the sequence of states in $\ket{S}$ contains each term $\ket{i}\ket{\psi_i}$ with a significant amplitude, the initial and final states $\ket{\psi_1}$ and $\ket{\psi_{m+1}}$ are what we asked for, and that the energy of each state $\ket{\psi_i}$ is low enough.

We show the detailed soundness proof in Section~\ref{sec:soundness}, and
continue with completeness in Section~\ref{sec:completeness}.
Our proof of soundness starts similarly to the one in \cite{BlierTapp}.
In contrast to \cite{BlierTapp}, we require much stronger guarantees on the uniformity of the sequence $\ket{U}$.
We are also asking for an encoding of 1- and 2- qubit gates instead of 3 colors for the graph coloring problem, so the dimension of the gate register has to be $G = \textrm{poly}(n)$.
Next, we have a batch of tests: \textsc{swap~S} and \textsc{sequence}, involving the sequence-encoding state $\ket{S}$. These are new and specific for the shortened quantum proof of GSCON. Finally, the \textsc{start}, \textsc{end}, and \textsc{low} tests check the boundary conditions and the low energy condition for the purported traversal of the low energy space of our ff-GSCON Hamiltonian.

\subsection{Soundness analysis} 
\label{sec:soundness}

Thanks to the promise of the ff-GSCON problem, in the {\em NO} case, the verifier receives a description of a GSCON Hamiltonian $H$, for which there does not exist a sequence of 1- and 2- qubit unitaries $\left\{U_i\right\}_{i=1}^{m}$ with $m=\textrm{poly}(n)$, that would transform the low-energy state $\ket{\psi}$ to a state close to $\ket{\phi}$, while staying in the low-energy subspace. 

Let us see what happens in the case of dishonest provers.
Our goal is to find an upper bound on the probability that the verifier accepts a proof from two malicious, but still unentangled provers. 
We will prove a sequence of Lemmas that together imply that when the provers try to cheat, there is a high enough chance that one of the tests from Section~\ref{sec:tests},
chosen at random would detect this.

\begin{figure}
\begin{center}
	\begin{tabular}{ | l | l | c |}
		\hline
		test & rejection threshold $r_i$ & required onwards from \\ 
		\hline
		1. \textsc{swap~U}     & $r_1 =\frac{\delta^2}{8}
			= \frac{1}{32G^4m^8t^6}$ & Lemma \ref{lem:swapU}
			 \\\hline
		2. \textsc{unique}    & $r_2 = \frac{cx^2}{4}
					= \frac{1}{4Gm^6t^4}$ & Lemma \ref{lem:Udefinition}
				 \\\hline
		3. \textsc{uniform}    & $r_3 = \frac{1}{5Gm^4t^2}$ & Lemma \ref{lem:bigA}
				\\\hline
		4. \textsc{swap~S}     & $r_4 = \frac{z}{4} = \frac{\mu^2}{4m^3}$ & Lemma \ref{lem:Sswap}
				\\\hline
		5. \textsc{sequence}   & $r_5 = \frac{1}{8mG} \frac{z}{4}
					= \frac{\mu^2}{32Gm^4}$ & Lemma \ref{lem:TS}
				\\\hline
		6. \textsc{start} 		 & $r_6	= \left(\frac{1}{2m}-6\mu\right)\frac{h^2}{4}$
		& Lemma \ref{lem:distfailstart} \\\hline
		7. \textsc{end} 			 & $r_7	= \left(\frac{1}{2m}-6\mu\right)\left(
			\frac{(\eta_3+h)^2}{2} - \frac{(\eta_3+h)^4}{8}
		\right)$& Lemma \ref{lem:distfailend}\\\hline
		8. \textsc{low}        & $r_8 = \frac{\eta_2}{8Rm}$ & 
					Lemma \ref{lem:lowe} \\\hline
	\end{tabular}
\end{center}	
\label{tab:prob}
\caption{
The Lemmas in the soudness Section~\ref{sec:soundness} 
assume that if we ran test $i$, it would pass with probability at least $1-r_i$.
Here we list the rejection threshold $r_i$ for each of the 8 tests.
We choose the probability to run Test $i$ 
as $p_i=r_i^{-1}/\sum_{j}r_j^{-1}$, so that $p_i r_i = 1-s'$,
where $s'$ is the final soundness parameter.
The thresholds $r_i$ are expressed using the parameters of the GSCON instance ($m,R,\eta_2,\dots,\eta_4$), as well as 
parameters set in \eqref{deltachoice}, \eqref{tchoice}, \eqref{hset}, and \eqref{muset}:
$\delta = \frac{cx}{2G}$,
$c = \frac{1}{G m^2 t^2}$,
$x = \frac{1}{m^2 t}$,
$z=\frac{\mu^2}{m^3}$, 
$t=\frac{848Gm^2}{\mu^2}$,
$h = \min \left\{
		\frac{\eta_4-\eta_3}{4}, \frac{1}{6}\sqrt{\frac{\eta_2}{R}}
	\right\}$,
	and
	$	\mu = \frac{h^2}{144m(\eta_3+h)}$.
} 
\end{figure}

\subsubsection{Verifying consistency and fullness of the sequence $\ket{U}$}

Our first Test (\textsc{swap~U}) is a \textsc{swap} test on the states $\ket{U}$ and $\ket{U'}$.
Because we know that these states come from unentangled provers, they can be written as 
\begin{align}
	\ket{U} &= \sum_{i=1}^{2m}\alpha_{i}\ket{i}\sum_{j}\beta_{i,j}\ket{j}, \label{Ustate}\\
	\ket{U'} &= \sum_{i=1}^{2m}\alpha'_{i}\ket{i}\sum_{j}\beta'_{i,j}\ket{j},
\end{align}
where $\sum_{i}|\alpha_{i}|^2 = 1$ and $\forall i$, $\sum_{j}|\beta_{i,j}|^2 = 1$, and the same holds for $\alpha_i'$ and $\beta'_{i,j}$.

We will start with showing that passing Test 1 (\textsc{swap~U}) with high enough probability implies the distribution of outcomes when measuring the states $\ket{U}$ and $\ket{U'}$ in the computational basis must be very similar.

\begin{lemma}[Consistency of unitaries]
\label{lem:swapU} Let $\ket{U}$ and $\ket{U'}$ be as defined earlier. If there exists a $k$ and an $l$ such that
\footnote{Note that there are squares in the expression, while \cite{BlierTapp}, Lemma 3.3, has a typo, missing the squares.} 
$\big|  |\alpha_k \beta_{k,l}|^2
-|\alpha'_k \beta'_{k,l}|^2 \big| \geq \delta$, then Test 1 (\textsc{swap~U}) will fail with probability at least $r_1 = \frac{\delta^2}{8}$. 
\end{lemma}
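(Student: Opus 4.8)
The plan is to bound the \textsc{swap}-test acceptance probability through the overlap $\braket{U}{U'}$, and then convert the hypothesized single-outcome probability gap $\delta$ into a quantitative gap for that overlap. Recall that a \textsc{swap} test on $\ket{U}$ and $\ket{U'}$ rejects with probability exactly $\half\left(1-|\braket{U}{U'}|^2\right)$, so it suffices to show that $|\braket{U}{U'}|^2$ is bounded away from $1$ by an amount of order $\delta^2$. Here the label-first decomposition of each witness is without loss of generality, and no product structure inside $\ket{U}$ is needed for this lemma.

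First I would observe that measuring either witness in the computational basis of the joint label--gate register produces an honest probability distribution: for $\ket{U}$, the outcome $(i,j)$ occurs with probability $p_{i,j}:=|\alpha_i\beta_{i,j}|^2$, and $\sum_{i,j}p_{i,j}=\sum_i|\alpha_i|^2\sum_j|\beta_{i,j}|^2=1$ by the normalization assumptions; likewise for $\ket{U'}$ with $p'_{i,j}:=|\alpha'_i\beta'_{i,j}|^2$. Expanding the inner product and applying the triangle inequality term by term gives
\begin{align}
|\braket{U}{U'}|
= \left| \sum_{i,j} \overline{\alpha_i\beta_{i,j}}\,\alpha'_i\beta'_{i,j} \right|
\le \sum_{i,j} |\alpha_i\beta_{i,j}|\,|\alpha'_i\beta'_{i,j}|
= \sum_{i,j} \sqrt{p_{i,j}}\sqrt{p'_{i,j}}
=: F(p,p'),
\end{align}
so the overlap is at most the Bhattacharyya coefficient (classical fidelity) of the two measurement distributions, which in turn equals $1-\half\sum_{i,j}\bigl(\sqrt{p_{i,j}}-\sqrt{p'_{i,j}}\bigr)^2$, one minus the squared Hellinger distance.

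Now I would invoke the hypothesis. By the symmetry of the statement under $\ket{U}\leftrightarrow\ket{U'}$ I may assume $p_{k,l}\ge p'_{k,l}+\delta$. Since both square roots lie in $[0,1]$,
\begin{align}
\sqrt{p_{k,l}}-\sqrt{p'_{k,l}}
= \frac{p_{k,l}-p'_{k,l}}{\sqrt{p_{k,l}}+\sqrt{p'_{k,l}}}
\ge \frac{\delta}{2},
\end{align}
so retaining only the $(k,l)$ term in the Hellinger sum yields $F(p,p')\le 1-\frac{\delta^2}{8}$, hence $1-|\braket{U}{U'}|^2\ge 1-F(p,p')^2\ge 1-F(p,p')\ge\frac{\delta^2}{8}$, and the \textsc{swap} test rejects with probability $\Omega(\delta^2)$. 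The only delicate point is precisely this last conversion --- turning the probability discrepancy $\delta$ on a single computational-basis outcome into a quantitative gap for $|\braket{U}{U'}|^2$ and tracking the constants so that the bound meets the declared threshold $r_1=\frac{\delta^2}{8}$ in Figure~\ref{tab:prob}; this is exactly where one must keep the squares that are missing from \cite{BlierTapp}, Lemma~3.3. Everything else is a routine unpacking of the \textsc{swap}-test statistics.
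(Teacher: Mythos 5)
Your overall strategy is the same as the paper's: reduce the \textsc{swap}-test overlap $|\braket{U}{U'}|$ to a classical distinguishability measure on the computational-basis outcome distributions $P_{i,j}=|\alpha_i\beta_{i,j}|^2$, $Q_{i,j}=|\alpha'_i\beta'_{i,j}|^2$, and then extract the single-outcome gap $\delta$. The difference is which classical measure you route through. The paper uses the trace/total-variation inequality $\sqrt{1-|\braket{U}{U'}|^2}\geq \frac12\sum_{i,j}|P_{i,j}-Q_{i,j}|\geq\frac{\delta}{2}$, which immediately gives $1-|\braket{U}{U'}|^2\geq\frac{\delta^2}{4}$ and hence rejection probability exactly $r_1=\frac{\delta^2}{8}$. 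You instead go through the Bhattacharyya coefficient / Hellinger distance, and this is where the constant slips: your chain $1-|\braket{U}{U'}|^2\geq 1-F^2\geq 1-F\geq\frac{\delta^2}{8}$ only bounds $1-|\braket{U}{U'}|^2$ by $\frac{\delta^2}{8}$ rather than $\frac{\delta^2}{4}$, so after the final factor of $\frac12$ from the \textsc{swap}-test statistics you have established a rejection probability of $\frac{\delta^2}{16}$, not the claimed $r_1=\frac{\delta^2}{8}$. Even squeezing your route harder via $1-F^2=H^2(2-H^2)\geq\frac{\delta^2}{4}-\frac{\delta^4}{64}$ leaves you a hair short, because the per-term estimate $\sqrt{P_{k,l}}-\sqrt{Q_{k,l}}\geq\frac{\delta}{2}$ is essentially tight (take $P_{k,l}=1$, $Q_{k,l}=1-\delta$). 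This is not fatal to the paper --- one could redefine $r_1=\frac{\delta^2}{16}$ and all downstream constants would only degrade by $O(1)$ --- but since the lemma asserts the specific threshold $\frac{\delta^2}{8}$, which feeds into the hypotheses of Lemma~\ref{lem:Udefinition} and the probabilities $p_i$ in Figure~\ref{tab:prob}, your proof as written does not deliver the statement. The clean fix is to replace the fidelity step with the total-variation step: keep your distributions $p,p'$, note $\frac12\sum_{i,j}|p_{i,j}-p'_{i,j}|\geq\frac{\delta}{2}$, and invoke $\sqrt{1-|\braket{U}{U'}|^2}\geq D(p,p')$ rather than $|\braket{U}{U'}|\leq F(p,p')$. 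Everything else in your argument (the normalization check, the data-processing bound via the triangle inequality, the WLOG on the sign of $p_{k,l}-p'_{k,l}$) is correct.
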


\begin{proof} 
This is Lemma~3.3 from \cite{BlierTapp}, and we repeat the proof.

Let $P_{i,j}=|\alpha_i \beta_{i,j}|^2$ and $Q_{i,j}=|\alpha'_i \beta'_{i,j}|^2$ be the probability distributions when $\ket{U}$ and
$\ket{U'}$ are measured in the computational basis. For any von Neumann measurement, the distances defined below are such
that $D(\ket{U}, \ket{U'}) \ge D(P,Q)$, where $P$ and $Q$ are the classical distributions of the measurement outcomes. Then,
\begin{eqnarray*}
\sqrt{1-|\braket{U}{U'}|^{2}}  
&{\stackrel{\text{\tiny def}}{=}}&  D(\ket{U}, \ket{U'}) 
\geq    D(P,Q)   
\,\, {\stackrel{\text{\tiny def}}{=}} \,\, 
			\half \sum_{ij} \left||\alpha_i \beta_{i,j}|^{2} - |\alpha'_i \beta'_{i,j}|^{2}\right|\\
&\geq &   \half\left||\alpha_k \beta_{k,l}|^{2} - |\alpha'_k \beta'_{k,l}|^{2}\right|
\geq    \frac{\delta}{2},
\end{eqnarray*}
assuming there exist $k,l$ with $\big|  |\alpha_k \beta_{k,l}|^2
-|\alpha'_k \beta'_{k,l}|^2 \big| \geq \delta$.
In that case, $|\braket{U}{U'}|^{2} \leq 1-\frac{\delta^2}{4}$
and Test 1 (\textsc{swap~U})  
will fail with probability
at least $\frac{\delta^2}{8}$.
\end{proof}

Therefore, if Test 1 would pass with probability at least $1-\frac{\delta^2}{8}$,
thanks to Lemma~\ref{lem:swapU}, we get a guarantee on the closeness of $\ket{U}$ and $\ket{U'}$:
\begin{align}
	\big|  |\alpha_k \beta_{k,l}|^2
-|\alpha'_k \beta'_{k,l}|^2 \big| < \delta.
	\label{u1guar}
\end{align}
Let us call $r_1 = \frac{\delta^2}{8}$
the {\em rejection threshold} for Test 1.
We will choose the probability $p_1$ to run Test 1 so that it is tied to the final soundness parameter as $s' = 1- p_1 r_1$.
We set the parameter $\delta$ below in \eqref{deltachoice},
and list $r_1$ in Figure~\ref{tab:prob}.

Let us look at the second test, armed with the guarantee \eqref{u1guar}.
We will prove that passing Test 2 (\textsc{unique}) with high probability means nodes with a high enough probability of being observed encode a well-defined {\em unitary}.
In particular, there is one $\beta_{i,j_i}$ that dominates, and the other $\beta_{i,\dots}$'s are small.

\begin{lemma}[Well defined unitaries]
\label{lem:Udefinition} Assume that the quantum proof would fail Test 1 (\textsc{swap~U}) 
with probability below $r_1=\frac{\delta^2}{8}$,
and fail Test 2 (\textsc{unique}) with probability below $r_2=\frac{cx^2}{4}$ (see also Figure~\ref{tab:prob}).
Then $\forall i: \,|\alpha_{i}|^2 \ge x$, there exists a $j \text{ such that } |\beta_{i,j}|^{2} \ge 1-c$, with 
\begin{align}
	c = \frac{1}{G m^2 t^2}, \quad
	x = \frac{1}{m^2 t}, \quad
	\delta = \frac{cx}{2G} = \frac{1}{2t^3 G^2m^4}, 
	\label{deltachoice}
\end{align}
where $G$ is the number of possible gates
and $t$ is a parameter to be chosen later in \eqref{tchoice}.
\end{lemma}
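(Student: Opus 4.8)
The plan is to argue by contradiction, in the spirit of the corresponding step in Blier--Tapp. Fix an index $i$ with $|\alpha_i|^2 \ge x$, and suppose that \emph{no} single $j$ satisfies $|\beta_{i,j}|^2 \ge 1 - c$; equivalently, at least two values of $j$ carry probability mass $\ge c$ within the gate register conditioned on label $i$ (or, more precisely, $\max_j |\beta_{i,j}|^2 < 1-c$, so the complementary mass $\sum_{j \neq j^\ast}|\beta_{i,j}|^2 > c$ where $j^\ast$ is the argmax). The first thing I would do is extract from the \textsc{unique} test a lower bound on its rejection probability in terms of $\ket{U}$ alone. When we measure the label registers of $\ket{U}$ and $\ket{U'}$, the probability of getting label $i$ from both copies is $|\alpha_i|^2 |\alpha_i'|^2$; conditioned on that, the probability the two gate-register outcomes disagree is $1 - \sum_j |\beta_{i,j}|^2 |\beta_{i,j}'|^2$. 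The test also rejects if the measured string is not a valid gate encoding, but the main contribution I want is the ``mismatch'' event.

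The second step is to convert the closeness guarantee \eqref{u1guar} from Test~1 into closeness of the conditional distributions $\beta_{i,\cdot}$ and $\beta'_{i,\cdot}$ \emph{after renormalizing by $|\alpha_i|^2 \approx |\alpha'_i|^2$}. Concretely, \eqref{u1guar} gives $\big| |\alpha_i\beta_{i,j}|^2 - |\alpha'_i\beta'_{i,j}|^2\big| < \delta$ for every $j$; summing the two extreme cases $j = j^\ast$ and any other heavy $j$ and using $|\alpha_i|^2 \ge x$ together with $\delta = cx/(2G)$ small compared to $cx$, I can conclude that $|\alpha'_i|^2 \ge x/2$ (say) and that $\ket{U'}$ also fails to concentrate on a single $j$ at label $i$: there are $\ge 2$ values of $j$ with $|\beta'_{i,j}|^2 \gtrsim c$. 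Then the collision term $\sum_j |\beta_{i,j}|^2|\beta'_{i,j}|^2$ is bounded away from $1$ — roughly, two distinct heavy outcomes in each copy force a disagreement probability $\gtrsim c$ — and hence the \textsc{unique} test rejects with probability at least $|\alpha_i|^2|\alpha'_i|^2 \cdot \Omega(c) \gtrsim x \cdot (x/2) \cdot \Omega(c)$, which with the stated constants beats $r_2 = cx^2/4$. This contradicts the hypothesis that Test~2 passes with probability $> 1 - r_2$, so the desired $j$ with $|\beta_{i,j}|^2 \ge 1-c$ must exist; the final claim that this $j$ encodes a unitary from the gate set follows because the non-gate-string branch of the \textsc{unique} test would otherwise also force rejection at rate $\ge |\alpha_i|^2|\alpha_i'|^2 c$.

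The main obstacle I anticipate is bookkeeping the constants so that all the losses — the factor $1/G$ relating $\delta$ to $cx$, the passage from $|\alpha_i|^2 \ge x$ to $|\alpha'_i|^2 \ge x/2$, and the exact combinatorial constant in lower-bounding $1 - \sum_j |\beta_{i,j}|^2|\beta'_{i,j}|^2$ by a multiple of $c$ — compose to give the clean threshold $r_2 = cx^2/4$ rather than something merely of the same order. A secondary subtlety is that the \textsc{unique} test as stated ``accepts'' when the labels differ, so only the equal-label branch contributes to rejection; one has to be careful that the probability weight $|\alpha_i|^2|\alpha'_i|^2$ of landing in that branch is retained and not diluted. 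I would handle both by first proving a cleaner intermediate inequality — ``if $|\alpha_i|^2 \ge x$ and $\max_j|\beta_{i,j}|^2 < 1-c$ then $\Pr[\textsc{unique rejects}] \ge \tfrac14 c x^2$'' — with explicit constants, taking $\delta = cx/(2G)$ precisely so the perturbation from replacing primed quantities by unprimed ones costs at most a factor $2$, and then contrapose.
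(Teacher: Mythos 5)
Your proposal is correct and follows essentially the same route as the paper: a proof by contradiction that lower-bounds the rejection probability of the \textsc{unique} test via the equal-label mismatch event, using the Test~1 (\textsc{swap~U}) guarantee $\big||\alpha_i\beta_{i,j}|^2-|\alpha'_i\beta'_{i,j}|^2\big|<\delta$ to transfer non-concentration from $\ket{U}$ to $\ket{U'}$, with $\delta=\frac{cx}{2G}$ chosen exactly so the $(G-1)\delta$ loss costs only a factor of~$2$. The paper just organizes the bookkeeping slightly more directly (bounding $|\alpha_i\beta_{i,0}|^2\sum_{j>0}|\alpha'_i\beta'_{i,j}|^2\ge(1-c)x(cx-\delta G)$ in one step rather than first establishing $|\alpha'_i|^2\ge x/2$ and a collision bound), and your constants do compose to the stated threshold $r_2=\frac{cx^2}{4}$ since $\sum_j|\beta_{i,j}|^2|\beta'_{i,j}|^2\le\max_j|\beta_{i,j}|^2<1-c$.
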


\begin{proof} 
This is a more general version of Lemma~3.4 from \cite{BlierTapp}, with stronger conditions and implications.

First note, that with the particular string $u_i$ we receive,
we can easily test if it encodes some unitary $U_i$ from the expected gate set.
We reject on failure. 

Now suppose for the sake of contradiction that there exists an $i$ with $\left|\alpha_i\right|^2\geq x$, for which the largest of the $\beta_{i,j}$'s  (without loss of generality, let it be $\beta_{i,0}$) obeys  $|\beta_{i,0}|^2 < 1-c$. Let us then calculate the probability of failing the \textsc{unique} test. It is surely bigger than 
\begin{align}
	p_{\textrm{\textsc{unique}}}^{\textrm{fail}} 
		\geq \left|\alpha_i \beta_{i,0}\right|^2 
			\sum_{j>0} \left|\alpha'_i \beta'_{i,j}\right|^2.
\end{align}
Because of Lemma~\ref{lem:swapU}, we know that $\big|\alpha'_i \beta'_{i,j}\big|^2 \geq \big|\alpha'_i \beta'_{i,j}\big|^2-\delta$. Therefore, we have
\begin{align}
	p_{\textrm{\textsc{unique}}}^{\textrm{fail}}  
		&\geq x |\beta_{i,0}|^2 
		\bigg(
			\sum_{j>0} \left|\alpha'_i \beta'_{i,j}\right|^2 - \left(G-1\right)\delta
		\bigg)
		\\
		&= x^2 |\beta_{i,0}|^2  \left(1-|\beta_{i,0}|^2 \right)
		- x\left(G-1\right)\delta |\beta_{i,0}|^2 \\
	&\geq x^2 (1-c)c - x\left(G-1\right)\delta (1-c) \\
	&\geq (1-c)x \left(cx-\delta G\right).
\end{align}
We set the parameters $c, x, \delta$ according to \eqref{deltachoice}, with a large
$t$ chosen later in \eqref{tchoice}. This gives us a bound
\begin{align}
		p_{\textrm{\textsc{unique}}}^{\textrm{fail}}  
			&\geq 
			\left(1-\frac{1}{Gm^2t^2}\right)\frac{cx^2}{2}
			\geq \frac{cx^2}{4},
			\label{pinco}
\end{align}
proving the Lemma.
\end{proof}

The rejection threshold $r_2$ and the probability $p_2$ to run Test 2 
are listed in Figure~\ref{tab:prob}, and chosen so that 
if the combined probability of passing the \textsc{unique} test
is at least $s'$, we get a guarantee on how well the $U$'s are defined in $\ket{U}$ from Lemma~\ref{lem:Udefinition}:
\begin{align}
	\forall i, \textrm{for which } |\alpha_{i}|^2 \ge x \textrm{, } \exists! j \text{ such that } |\beta_{i,j}|^{2} \ge 1-c,
	\label{u2guar}
\end{align}
with $c=(Gm^2t^2)^{-1}$ for $t$ from \eqref{tchoice}.

Armed with \eqref{u2guar}, let us look at the third test.
The next three Lemmas quantify what passing the tests up to and including Test 3 (\textsc{uniform}) with high probability implies: the state $\ket{U}$ contains a nearly uniform superposition of states of the form $\ket{i}\ket{U_i}$.
We start by showing that the probability to find a uniform superposition in the gate (second) register of $\ket{U}$, when performing the first measurement of Test 3, is very well defined.

\begin{lemma}[Projection onto the uniform superposition of gates]
\label{lem:fourier} Assume the quantum proof would fail Test 1 (\textsc{swap~U}) and Test 2 (\textsc{unique}) 
with probabilities below $r_1$ and $r_2$ from Figure~\ref{tab:prob}.
Then the probability of measuring $\ket{\bar{g}}=F_{G} \ket{0}$ in the Fourier basis on the gate register is at least $\frac{1}{G}\left(1-\frac{6}{mt}\right)$ for large enough $m$.
Moreover, for each $i$ with $\left|\alpha_i\right|^2 \geq x$, 
the individual probability of this projection satisfies
$|p_i^{F_G}-\frac{1}{G}|\leq\frac{4}{Gmt}$.
\end{lemma}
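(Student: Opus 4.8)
The plan is to write the Fourier-basis measurement probability explicitly and then control it using the near-delta structure of the gate register guaranteed by Lemma~\ref{lem:Udefinition}. Writing $\ket{\bar g}=F_G\ket 0=\tfrac{1}{\sqrt G}\sum_j\ket j$, projecting the gate register of $\ket U=\sum_i\alpha_i\ket i\sum_j\beta_{i,j}\ket j$ onto $\ket{\bar g}$ leaves, in branch $i$, the amplitude $\tfrac1{\sqrt G}\sum_j\beta_{i,j}$. Hence the total success probability is $p^{F_G}=\tfrac1G\sum_{i=1}^{2m}|\alpha_i|^2\bigl|\sum_j\beta_{i,j}\bigr|^2$, and the conditional probability given label $i$ is $p_i^{F_G}=\tfrac1G\bigl|\sum_j\beta_{i,j}\bigr|^2$. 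So the whole lemma reduces to showing that $\bigl|\sum_j\beta_{i,j}\bigr|^2$ is close to $1$ for the ``heavy'' indices $|\alpha_i|^2\ge x$, and that these indices carry almost all of the $\alpha$-weight.

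For a heavy index $i$, Lemma~\ref{lem:Udefinition} provides a dominant coefficient $\beta_{i,j_i}$ with $|\beta_{i,j_i}|^2\ge 1-c$, so that $\sum_{j\ne j_i}|\beta_{i,j}|^2\le c$. The one spot requiring care is that the remaining $\beta_{i,j}$'s carry arbitrary phases, so I cannot simply sum absolute values; instead I would use Cauchy--Schwarz, $\bigl|\sum_{j\ne j_i}\beta_{i,j}\bigr|\le\sqrt{(G-1)c}\le\sqrt{Gc}=\tfrac1{mt}$, where the $\sqrt G$ loss is exactly what the $1/G$ factor in $c=(Gm^2t^2)^{-1}$ was designed to absorb. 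Combining this with $|\beta_{i,j_i}|\ge\sqrt{1-c}\ge 1-c$ and $c\le\tfrac1{mt}$ gives $1-\tfrac2{mt}\le\bigl|\sum_j\beta_{i,j}\bigr|\le 1+\tfrac1{mt}$, and squaring (discarding the $O((mt)^{-2})$ terms, valid for $m$ large enough) yields $\bigl||\sum_j\beta_{i,j}|^2-1\bigr|\le\tfrac4{mt}$. This is precisely the ``moreover'' claim $|p_i^{F_G}-\tfrac1G|\le\tfrac4{Gmt}$.

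Next I would bound the weight lost to the ``light'' indices: there are at most $2m$ values of $i$, each light one contributing $|\alpha_i|^2<x=\tfrac1{m^2t}$, hence $\sum_{i\text{ light}}|\alpha_i|^2<\tfrac2{mt}$ and $\sum_{i\text{ heavy}}|\alpha_i|^2>1-\tfrac2{mt}$. Dropping the non-negative light terms from $p^{F_G}$ and using the per-branch lower bound on the heavy terms gives $p^{F_G}\ge\tfrac1G\bigl(1-\tfrac2{mt}\bigr)\bigl(1-\tfrac4{mt}\bigr)\ge\tfrac1G\bigl(1-\tfrac6{mt}\bigr)$, as claimed.

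The argument is essentially a sequence of short estimates; I expect the only mildly delicate points to be (i) handling the phases of the subdominant $\beta_{i,j}$'s through Cauchy--Schwarz rather than a naive triangle inequality (which would cost a factor $G$ instead of $\sqrt G$ and break the bound), and (ii) bookkeeping the lower-order error terms carefully so they fit under the stated $6/(mt)$ and $4/(mt)$ thresholds, which is where the ``for large enough $m$'' hypothesis is used.
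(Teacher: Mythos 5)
Your proposal is correct and follows essentially the same route as the paper: the same reduction to $p_i^{F_G}=\frac{1}{G}\bigl|\sum_j\beta_{i,j}\bigr|^2$, the same invocation of Lemma~\ref{lem:Udefinition} for heavy indices, the same Cauchy--Schwarz step giving $\bigl|\sum_{j\neq j_i}\beta_{i,j}\bigr|\le\sqrt{Gc}=\frac{1}{mt}$, and the same $(2m)x\le\frac{2}{mt}$ bound on the light-index weight. The only cosmetic difference is that you obtain the upper bound on $p_i^{F_G}$ directly from the triangle inequality plus Cauchy--Schwarz, whereas the paper bounds it by the maximally balanced configuration of the subdominant $\beta_{i,j}$'s; both give the same $\frac{1}{G}\bigl(1+\frac{4}{mt}\bigr)$.
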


\begin{proof} 
This Lemma is based on Lemma~3.5 from \cite{BlierTapp}, and has much stronger conditions and implications.

Thanks to the assumption on the rejection probability for the previous tests, we can use Lemmas~\ref{lem:swapU} and~\ref{lem:Udefinition}.

Assume that the first (label) register of the state $\ket{U}$ is measured. If the outcome is $i$, then the probability of obtaining $\ket{\bar{g}}$ in the Fourier
basis on the gate register is given by $p_i^{F_G} = \frac{1}{G}\big|\sum_j \beta_{i,j}\big|^2$.
For all $i$ with $\left|\alpha_i\right|^2 \geq x$, Lemma \ref{lem:Udefinition} applies, in which case we can assume w.l.o.g that $|\beta_{i,0}|^2>1-c$ and
$\sum_{j\neq 0} |\beta_{i,j}|^2 \leq c$. Using the Cauchy-Schwarz inequality, we obtain
\begin{align} 
	p_i^{F_G} = \frac{1}{G} \bigg|\sum_j \beta_{i,j}\bigg|^2
  &\geq  \frac{1}{G}\Bigg| 
 \left|\beta_{i,0}\right| - \bigg|\sum_{j\neq 0} \beta_{i,j}\bigg|
\Bigg|^2
  \geq  \frac{1}{G}\Bigg| 
 \left|\beta_{i,0}\right| - \sqrt{G \sum_{j\neq 0} \left|\beta_{i,j}\right|^2}
\Bigg|^2
 \nonumber\\
& \geq  \frac{1}{G}\left| 
 \sqrt{1-c} - \sqrt{G c}
\right|^2
\geq \frac{1}{G}\left|1-\frac{1}{m^2Gt^2}-\frac{1}{mt}\right|^2
 \geq 
\frac{1}{G}\left(1-\frac{4}{mt}\right), \label{Fbelow}
 \end{align}
for $c=(Gm^2t^2)^{-1}$.

Note that in $\ket{U}$ \eqref{Ustate}, at least one $\left|\alpha_i\right|^2 \geq x$, or equivalently, at most $2m-1$ can obey $\left|\alpha_i\right|^2<x$ so that Lemma~\ref{lem:Udefinition} doesn't apply to them. Therefore, when projecting the gate register of the whole state $\ket{U}$ onto the uniform superposition, the probability of obtaining 0 is at least
\begin{align}
    \left(1-(2m-1) x\right)\frac{1}{G}\left(1-\frac{4}{m}\right) 
			\geq \left(1-\frac{2}{mt}\right)\left(1-\frac{4}{mt}\right) \frac{1}{G}
			\geq \left(1-\frac{6}{mt}\right)\frac{1}{G}.
\end{align}

In addition to \eqref{Fbelow}, we can also find an upper bound on the individual probabilities $p_{i}^{F_G}$.
For $i$ with $|a_i|^2\geq x$, Lemma~3 applies, and one of the $\beta_{i,j}$'s is necessarily large. The probability for a successful projection onto a uniform superposition 
is then bounded from above by a situation where the $\beta$'s are as balanced as possible:
\begin{align}
p_i^{F_G} = \frac{1}{G} \bigg| \sum_j \beta_{i,j}\bigg|^2
= \frac{1}{G} \bigg| \beta_{i,0} + \sum_{j>0} \beta_{i,j}\bigg|^2
  &\leq  
	\frac{1}{G}\left(
	 	(1-c) + \left(G-1\right) \sqrt{\frac{1-(1-c)^2}{G-1}} 
		\right)^2
		\nonumber\\
  &\leq \frac{1}{G}\left(1+2\sqrt{Gc}\right)^2
  \leq \frac{1}{G}\left(1+\frac{4}{mt}\right).
\end{align}
This concludes the proof of Lemma~\ref{lem:fourier}.
\end{proof}

Thus, if Test 1 and Test 2 are likely to succeed, Part 1 of Test 3 (the Fourier projection on the gate register) will succeed with probability at least $\frac{1}{G}\left(1-\frac{6}{mt}\right)$, allowing us to continue to the second step of Test 3. 
It involves a measurement of the label register that detects slightly non-uniform states.

Let us look on the state $\ket{U}$ after the projection on the uniform superposition of gates. We can write this projected and normalized state as $\sum_i \gamma_i \ket{i} \ket{\bar{g}}$.
The following Lemma tells us that to successfully pass the Fourier-basis projection onto the uniform superposition of states in the label register, the $\gamma_i$'s all have to be very close to $\frac{1}{\sqrt{2m}}$ in magnitude.
\begin{lemma}[A successful Fourier projection implies uniformity]
\label{lem:Fourproject} Given a state $\ket{X} = \sum_{i} \gamma_{i}\ket{i}$ such that there exists an $l$ with $\left||\gamma_{l}|^2 - \frac{1}{2m}\right| > \frac{f}{m}$,
the probability of not getting $\ket{\overline 0}=F_{2m} \ket{0}$ when we measure $\ket{X}$ in the Fourier basis is greater than $\frac{f^2}{4m^{2}}$,
for $f>0$.
\end{lemma}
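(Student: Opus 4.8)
The plan is to express the failure probability of the Fourier-basis projection as a quadratic form in the amplitudes $\gamma_i$, and then lower-bound it using only the single bad coordinate $\gamma_l$. First I would write the Fourier state $\ket{\overline 0}=F_{2m}\ket{0}=\frac{1}{\sqrt{2m}}\sum_{i=1}^{2m}\ket{i}$, so that the probability of \emph{obtaining} $\ket{\overline 0}$ when measuring $\ket{X}$ is $p=|\braket{\overline 0}{X}|^2=\frac{1}{2m}\big|\sum_i \gamma_i\big|^2$, and the failure probability is $1-p$. The convenient route is to decompose $\ket{X}=\sqrt{p}\,\ket{\overline 0}+\sqrt{1-p}\,\ket{\overline 0^{\perp}}$ and compare the distribution $\{|\gamma_i|^2\}$ against the uniform distribution $\{\frac{1}{2m}\}$ that $\ket{\overline0}$ would produce in the computational basis; since the computational-basis measurement cannot increase trace distance, $\frac12\sum_i\big||\gamma_i|^2-\tfrac{1}{2m}\big|\le D(\ket X,\ket{\overline0})=\sqrt{1-|\braket{\overline0}{X}|^2}=\sqrt{1-p}$.

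Next I would use the hypothesis: there is an $l$ with $\big||\gamma_l|^2-\tfrac1{2m}\big|>\tfrac f m$, which immediately gives $\frac12\sum_i\big||\gamma_i|^2-\tfrac1{2m}\big|\ge \frac12\big||\gamma_l|^2-\tfrac1{2m}\big|>\tfrac{f}{2m}$. Combining with the previous display yields $\sqrt{1-p}>\tfrac{f}{2m}$, hence the probability of \emph{not} getting $\ket{\overline0}$ is $1-p>\tfrac{f^2}{4m^2}$, which is exactly the claim. This mirrors the argument used for Lemma~\ref{lem:swapU}, where a single coordinate of a trace-distance bound is isolated, so the structure should go through cleanly.

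The only place that needs a little care — and what I expect to be the main (minor) obstacle — is justifying the trace-distance inequality direction: one must make sure the relevant comparison is between $\ket X$ and the \emph{normalized} Fourier state $\ket{\overline 0}$ (whose computational-basis distribution is exactly uniform on $[2m]$), not between $\ket X$ and some post-measurement state, and that the data-processing inequality $D(\rho,\sigma)\ge D(\text{measured }\rho,\text{measured }\sigma)$ is applied to the computational-basis POVM. Once that is set up, everything is a two-line estimate; there is no delicate optimization, since we only exploit the one coordinate $l$ guaranteed by the hypothesis and discard the rest of the sum (which only helps). An alternative, fully elementary derivation avoiding trace distance — directly bounding $\big|\sum_i\gamma_i\big|^2$ by Cauchy–Schwarz after splitting off $\gamma_l$ — also works, but the trace-distance version is shortest and matches the style of the surrounding lemmas, so that is the route I would present.
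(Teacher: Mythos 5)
Your proposal is correct and follows essentially the same route as the paper: the paper also writes the failure probability as $1-|\braket{X}{\overline 0}|^2 = D(\ket{X},\ket{\overline 0})^2$, applies the data-processing inequality for the computational-basis measurement to get $D(\ket{X},\ket{\overline 0})\geq \frac{1}{2}\sum_i \bigl||\gamma_i|^2-\frac{1}{2m}\bigr|$, and then drops all terms but the $l$-th. No meaningful differences to report.
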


\begin{proof} 
This is a stronger version of Lemma~3.6 from \cite{BlierTapp}.

The probability of not getting $\ket{\bar{0}}$ when measuring $\ket{X}$ depends on the overlap of these states. Let us call $P$ and $Q$ the probability distributions for a computational basis measurement of $\ket{X}$ and $F_{2m} \ket{0}$, respectively. 
The probability of not getting $\ket{\bar{0}}$ then obeys
\begin{eqnarray}
1-|\braket{X}{\overline 0}|^2 
&= \left(D(\ket{X}, \ket{\overline 0}) \right)^2
\ge \left(D(P,Q)\right)^2
= 
\left(\half \sum_{i} \left| \prob{\gamma_{i}} - \frac{1}{2m} \right| \right)^2
\ge \left(\half \left| |\gamma_{l}|^2 - \frac{1}{2m} \right| \right)^2
 > \frac{f^2}{4m^2}.
\end{eqnarray}
\end{proof}

Lemma~\ref{lem:Fourproject} allows us to prove a statement about the original coefficients $\alpha_i$ in $\ket{U}$: passing Tests 1-3 with high probability implies a valid encoding of all the required unitaries $U_i$ for $i=1,\dots 2m$, with nearly uniform prefactors, as stated in the next Lemma.

\begin{lemma}[A full sequence of unitaries]
\label{lem:bigA} 
Assume that Test 1 (\textsc{swap~U}) and Test2 (\textsc{unique})
fail with probability below $r_1$ and $r_2$ from Figure~\ref{tab:prob}.
Assume that Test 3 (\textsc{uniform}) fails with probability below $r_3 = \frac{1}{5Gm^4 t^2}$.
Then the coefficients $\alpha_i$ in the state $\ket{U}$ obey
$\left| \left|\alpha_i\right|^2 -\frac{1}{2m}\right| \leq \frac{13}{2m^2t}$,
for all $i$.
\end{lemma}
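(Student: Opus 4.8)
The plan is to combine the two stages of Test 3 (\textsc{uniform}) using Lemmas~\ref{lem:fourier} and~\ref{lem:Fourproject}. First I would recall that after the first (gate-register) Fourier projection of Test 3 succeeds — which, by Lemma~\ref{lem:fourier}, happens with probability at least $\frac{1}{G}(1-\frac{6}{mt})$ — we are left with the renormalized state $\ket{X} = \sum_i \gamma_i \ket{i}\ket{\bar g}$, where $\gamma_i$ is (up to the global normalization) $\alpha_i$ times the ``projection amplitude'' $\frac{1}{\sqrt G}\sum_j \beta_{i,j}$. The second stage of Test 3 then performs the $F_{2m}$ measurement on the label register of $\ket{X}$ and rejects unless it sees $\ket{\bar 0}$. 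By Lemma~\ref{lem:Fourproject} (with $f$ to be fixed), if some $|\gamma_l|^2$ deviates from $\frac{1}{2m}$ by more than $\frac{f}{m}$, this second stage rejects with probability $> \frac{f^2}{4m^2}$.

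Second, I would chain the probabilities: the overall probability that Test 3 rejects is at least (probability the first stage succeeds) $\times$ (conditional probability the second stage rejects), i.e. at least $\frac{1}{G}(1-\frac{6}{mt}) \cdot \frac{f^2}{4m^2}$. Setting this $\ge r_3 = \frac{1}{5Gm^4t^2}$ and solving for the threshold forces $f$ of order $\frac{1}{t}$ (the constant $\frac{6}{mt}$ and the $\frac15$ vs. $\frac14$ are there to absorb slack); so the contrapositive of ``Test 3 fails with probability below $r_3$'' gives $\big||\gamma_i|^2 - \frac{1}{2m}\big| \le \frac{f}{m} = O(\frac{1}{mt})$ for every $i$.

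Third — and this is the step that needs care — I would convert the bound on $|\gamma_i|^2$ back into a bound on $|\alpha_i|^2$. Writing $|\gamma_i|^2 = N^{-1}\,|\alpha_i|^2\, p_i^{F_G}\cdot G$ where $N$ is the success probability of the first projection and $p_i^{F_G}$ is the per-$i$ projection probability, I would use the two-sided estimate from Lemma~\ref{lem:fourier}, namely $|p_i^{F_G} - \frac1G| \le \frac{4}{Gmt}$ for all $i$ with $|\alpha_i|^2 \ge x$, together with $N = \frac1G(1+O(\frac{1}{mt}))$, to show $G\,p_i^{F_G}/N = 1 + O(\frac{1}{mt})$. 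Hence $|\gamma_i|^2$ and $|\alpha_i|^2$ agree up to a multiplicative $1+O(\frac{1}{mt})$ factor, which upgrades $\big||\gamma_i|^2 - \frac1{2m}\big| = O(\frac{1}{mt})$ to $\big||\alpha_i|^2 - \frac1{2m}\big| \le \frac{13}{2m^2t}$ after tracking constants. The one subtlety is the indices $i$ with $|\alpha_i|^2 < x$: there Lemma~\ref{lem:fourier}'s per-$i$ estimate need not hold, but such $i$ already satisfy $|\alpha_i|^2 < x = \frac{1}{m^2 t} < \frac{13}{2m^2 t}$ and $\frac1{2m} - \frac{13}{2m^2t} \le |\alpha_i|^2$ fails only if $\frac1{2m}$ is too large — so I would instead argue directly that if any such ``small'' $\alpha_i$ existed, the uniformity violation $\big||\gamma_i|^2 - \frac1{2m}\big|$ would be large enough for Lemma~\ref{lem:Fourproject} to force rejection above $r_3$, a contradiction.

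The main obstacle is the bookkeeping in the third step: keeping the multiplicative error propagation between $|\gamma_i|^2$ and $|\alpha_i|^2$ tight enough that the final constant is genuinely $\frac{13}{2}$ and not something larger, and handling the low-amplitude indices cleanly rather than by appealing to a per-$i$ bound that does not apply to them. Everything else is a routine substitution of the parameter choices from \eqref{deltachoice} and \eqref{tchoice}.
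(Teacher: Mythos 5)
Your plan follows the paper's proof essentially step for step: chain the two stages of Test~3 (first-stage success probability at least $\frac{1}{G}(1-\frac{6}{mt})$ times the conditional second-stage rejection), apply Lemma~\ref{lem:Fourproject} in the contrapositive to the renormalized post-projection state, convert the bound on $|\gamma_i|^2$ back to one on $|\alpha_i|^2$ via the two-sided per-$i$ estimate of Lemma~\ref{lem:fourier} together with upper and lower bounds on the overall projection norm, and rule out indices with $|\alpha_i|^2 < x$ by noting they would force $|\gamma_i|^2 \ll \frac{1}{2m}$ and hence be detected. The one slip is in your second step: solving $\frac{1}{G}\bigl(1-\frac{6}{mt}\bigr)\frac{f^2}{4m^2} \geq \frac{1}{5Gm^4t^2}$ gives $f = \Theta\bigl(\frac{1}{mt}\bigr)$, not $\Theta\bigl(\frac{1}{t}\bigr)$, so the guarantee is $\bigl||\gamma_i|^2-\frac{1}{2m}\bigr| \leq \frac{f}{m} = \Theta\bigl(\frac{1}{m^2t}\bigr)$ --- which is exactly what you need, since a deviation of only $O\bigl(\frac{1}{mt}\bigr)$ as written would be a factor of $m$ too weak to recover the claimed $\frac{13}{2m^2t}$ after the multiplicative $1+O\bigl(\frac{1}{mt}\bigr)$ correction.
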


Note that the parameter $t$ is still free. We will set it to be a large number
later \eqref{tchoice}.

\begin{proof} 
Thanks to Lemmas~\ref{lem:swapU}-\ref{lem:Fourproject}, we are now able to show a bound on the coefficients $\alpha_i$ that is much tighter than Lemma 3.7 in \cite{BlierTapp}.

Thanks to the assumption on the rejection probabilities for the previous tests, we can use
the previous Lemmas. We also add the assumption that Test 3 rejects the proof with probability below $\frac{1}{5Gm^4 t^2}$. This rejection can happen only if the first Fourier projection on the gate register passes (this has probability at least $\frac{1}{G}\left(1-\frac{6}{mt}\right)$ according to Lemma~\ref{lem:fourier}), and then the second Fourier projection on the label register fails.
When we choose $f=\frac{1}{mt}$ in Lemma~\ref{lem:Fourproject}
we see that the second Fourier basis projection either rejects with probability at least $\frac{1}{4m^4t^2}$,
or we get a guarantee that no $|\gamma_l|^2$ is farther from $\frac{1}{2m}$ than $\frac{f}{m}$.
The overall probability of detecting a cheater is now thus either at least
$\frac{1}{G}\left(1-\frac{6}{mt}\right)\frac{1}{4m^4t^2}
\geq \frac{1}{5Gm^4 t^2}$, a contradiction on the assumption of the Lemma,
or we get the guarantee on $|\gamma_l|^2$.

Let us then work with this guarantee
and analyze what happens after the first successful projection onto the uniform superposition $\ket{\bar{g}}$ in the gate register of $\ket{U}$, i.e. the first step of Test 3. 
For the significant $\alpha_i$'s ($\left|\alpha_i\right|^2 \geq x$), Lemma~\ref{lem:Udefinition} tells us that they encode a pretty well 
defined unitary, and Lemma~\ref{lem:fourier} tells us that the probability of getting a successful projection onto $\ket{\bar{g}}$ for each of these $i$'s is at most $\frac{4}{Gtm}$ far from $\frac{1}{G}$.
This projection thus brings down the norm of this part of the state, but not to something smaller than 
\begin{align}
	\|\textrm{the large-$\alpha_i$ part after the projection}\|^2
	 \geq \frac{1}{G} \left(1-\frac{4}{tm}\right)
\sum_{\left|\alpha_i\right|^2\geq x} \left|\alpha_i\right|^2.
\end{align}
Next, we know there can't be too much of the norm of the state $\ket{U}$ hiding in parts of the superposition with small $\left|\alpha_i\right|^2 \leq x$. The state $\ket{U}$ is normalized, and there are at most $2m-1$ such $i$'s, so the norm of that small-$\alpha_i$ part of the state is 
\begin{align}
	\sum_{\left|\alpha_i\right|^2 <x}  \left|\alpha_i\right|^2 \leq (2m-1)x \leq \frac{2}{tm}, \label{smallalpha}
\end{align}
for our choice of $x=\frac{1}{m^2t}$ in \eqref{deltachoice}.

Even if the projection on the uniform superposition in the gate register kills this small-$\alpha_i$ part, the overall norm squared $N^2$ of the whole state after the projection is at least
\begin{align}
	N^2 
	&\geq
	\frac{1}{G} \left(1-\frac{4}{tm}\right) \sum_{\left|\alpha_i\right|^2\geq x} 	 \left|\alpha_i\right|^2
		\geq 
	\frac{1}{G} \left(1-\frac{4}{tm}\right)\left(1-\frac{2}{tm}\right)
	\geq \frac{1}{G} \left(1-\frac{6}{tm}\right), \label{normbig}
\end{align}
using \eqref{smallalpha} and our choice \eqref{deltachoice}.

Let us find a stronger lower bound for $\left|\alpha_i\right|^2 \geq x$.
We obtain the $\gamma_i$'s by normalizing the state after the projection.
Using \eqref{normbig} and recalling the large-$|\alpha_i|^2$ terms are multiplied by at most $\frac{1}{G}\left(1+\frac{4}{tm}\right)$ when projected, we obtain
\begin{align}
	|\gamma_i|^2 
	\leq \frac{\frac{1}{G}\left(1+\frac{4}{mt}\right)}{\frac{1}{G}\left(1-\frac{6}{mt}\right)} \left|\alpha_i\right|^2
	\leq \left(1+\frac{11}{mt}\right) \left|\alpha_i\right|^2,
\end{align}
for large enough $m, t$.
Because we know from Lemma~\ref{lem:Fourproject} that all $|\gamma_i|^2$ must be close to $\frac{1}{2m}$, those $\left|\alpha_i\right|^2 \geq x$ must obey
\begin{align}
	\frac{1}{2m}\left(1-2f\right) &\leq |\gamma_i|^2 
	\leq \left(1+\frac{11}{mt}\right) \left|\alpha_i\right|^2. 
\end{align}
Choosing $f=\frac{1}{mt}$ in Lemma~\ref{lem:Fourproject}, we have
\begin{align}
	\left|\alpha_i\right|^2 & \geq \frac{1}{2m} \left(\frac{1-\frac{2}{mt}}{1+\frac{11}{mt}}\right)
	\geq \frac{1}{2m} \left(1-\frac{13}{mt}\right). \label{bigA}
\end{align}

What about the small $\left|\alpha_i\right|^2 < x$? Even if they do not decrease on projection, they get multiplied by at most $\frac{1}{G}\left(1+\frac{4}{tm}\right) \leq \frac{2}{G}$, which implies $|\gamma_i|^2 \leq \frac{2x}{G}$.
However, because $x = \frac{1}{tm^2}$, such $|\gamma_i|^2$ would be much smaller than $\frac{1}{2m}$, and thus easily detectable by Lemma~\ref{lem:Fourproject}.
Therefore, small $\left|\alpha_i\right|^2<x$ can not exist in the superposition $\ket{U}$ without being detected by our tests with a reasonable probability. 

Therefore, {\em all} $\left|\alpha_i\right|^2$ are bounded from below by \eqref{bigA}.
Moreover, we can also find a limit on how big they can be. 
To show this, we start with an upper bound on the norm of the whole state after the projection.
\begin{align}
	N^2 = \|\textrm{the whole state after the projection}\|^2
	 \leq 
	\frac{1}{G} \left(1+\frac{4}{mt}\right)
\sum_{\left|\alpha_i\right|^2\geq x} \left|\alpha_i\right|^2
	= \frac{1}{G} \left(1+\frac{4}{mt}\right),
\end{align}
as there are no small-$\alpha_i$ coefficients.
This implies for the $\gamma_i$'s that
\begin{align}
	|\gamma_i|^2 \geq 
		 \frac{\frac{1}{G}\left(1-\frac{4}{mt}\right)}{\frac{1}{G}\left(1+\frac{4}{mt}\right)}
		\left|\alpha_i\right|^2 \geq
		\left(1-\frac{8}{mt}\right) \left|\alpha_i\right|^2.
\end{align}
Recalling the guarantee $\left||\gamma_i|^2 -\frac{1}{2m}\right| \leq \frac{f}{m}$ from Lemma~\ref{lem:Fourproject} with $f=\frac{1}{mt}$, we also obtain
\begin{align}
	\frac{1}{2m}\left(1+2f\right) &\geq |\gamma_i|^2 
	\geq \left(1-\frac{8}{mt}\right) \left|\alpha_i\right|^2, 
\end{align}
which translates to an upper bound on $|\alpha_i|^2$:
\begin{align}
	\left|\alpha_i\right|^2 
	&\leq \frac{1}{2m} \left(\frac{1+\frac{2}{mt}}{1-\frac{8}{mt}}\right)
	 \leq \frac{1}{2m} \left(1+\frac{13}{mt}\right),
		\label{bigA2}
\end{align}
for large enough $m, t$.

Putting together \eqref{bigA} and \eqref{bigA2} finishes the proof.
Note that all $|\alpha_i|^2$ are thus large enough for Lemma 4, so {\em all} of the $2m$ encoded unitaries must be ``well defined''.
\end{proof}

Therefore, if we chose to run Tests 1-3 on $\ket{U}$ and $\ket{U'}$, and each would be likely to pass, we have a guarantee that the state $\ket{U}$ as well as the state $\ket{U'}$ must have form very close to what we demand, i.e.
\begin{align}
	\ket{U} 
	&= \frac{1}{\sqrt{2m}} \sum_{i=1}^{2m} \ket{i}\ket{u_i}  
	+ \frac{1}{\sqrt{2m}} \sum_{i=1}^{2m} \theta_i \ket{i}
	\ket{\theta_i},
	\label{Ugood}
\end{align}
where $u_i$ are computational basis states that encode the gates $U_i$, 
the second (error) term is orthogonal to the first one,
and 
\begin{align}
	|\theta_i|^2 \leq 2\left(\frac{26}{tm}+\frac{1}{Gm^2t^2}\right) \leq \frac{53}{tm},
	\label{Ugood2}
\end{align}
where the first term comes from $|\alpha_i|^2$ possibly deviating from $\frac{1}{2m}$, and the second term from the possible imprecision in the definition of the unitaries (the error $c$ in Lemma~\ref{lem:Udefinition}).
This encoding of the unitaries is solid enough to help us verify the state sequence is also proper, and thus prove the soundness of our verifier.

\subsubsection{Verifying consistency of the states $\ket{S}$}

We will now show how to apply the $U_i$'s to the state $\ket{S}$, in order to test if it is a proper cyclical sequence connected by 1 and 2 qubit gates.
It requires a guarantee on the consistency of the $\ket{S}$ states,
and a procedure for the probabilistic application of the $U_i$'s.

Let us quantify what the \textsc{swap~S} test (Test 4) implies for 
the similarity of two witness states $\ket{S}$ and $\ket{S'}$.
\begin{lemma}[State consistency]
\label{lem:Sswap}
Let $\ket{S}$ and $\ket{S'}$ be two-register, normalized quantum states
\begin{align}
\ket{S} = \sum_{i=1}^{m}a_{i}\ket{i}\ket{\psi_i},\quad\ket{S'} = \sum_{i=1}^{m}a'_{i}\ket{i}\ket{\psi'_i},
\end{align}
and label 
$\ket{\Delta_S}=\ket{S}-\ket{S'}=\sum_{i}\ket{i}\ket{\delta_i}$,
with $\ket{\delta_i}=a_{i}\ket{\psi_i}-a'_{i}\ket{\psi'_i}$.
If there exists a $k$ such that $\braket{\delta_k}{\delta_k}\geq z$, the \textsc{swap~S} test (Test 4) on the states $\ket{S}$ and $\ket{S'}$ will fail with probability at least $r_4=\frac{z}{4}$.
\end{lemma}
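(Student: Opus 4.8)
The plan is to mimic exactly the argument used for Lemma~\ref{lem:swapU}, replacing the classical computational-basis distribution bound with a direct geometric estimate, since here we do not want to measure in a fixed basis but simply to lower-bound $1-|\braket{S}{S'}|^2$. First I would recall that a \textsc{swap} test on two pure states $\ket{S},\ket{S'}$ rejects with probability $\frac12\left(1-|\braket{S}{S'}|^2\right)$, so it suffices to show that if some $\braket{\delta_k}{\delta_k}\ge z$, then $|\braket{S}{S'}|^2\le 1-\frac{z}{2}$, which gives rejection probability at least $\frac{z}{4}=r_4$.

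The key step is to bound $|\braket{S}{S'}|$ in terms of $\norm{\ket{\Delta_S}}$. Write $\ket{\Delta_S}=\ket{S}-\ket{S'}=\sum_i\ket{i}\ket{\delta_i}$, so that by orthogonality of the label register $\norm{\ket{\Delta_S}}^2=\sum_i\braket{\delta_i}{\delta_i}\ge \braket{\delta_k}{\delta_k}\ge z$. Now expand $\norm{\ket{\Delta_S}}^2=\braket{S}{S}+\braket{S'}{S'}-2\,\mathrm{Re}\braket{S}{S'}=2-2\,\mathrm{Re}\braket{S}{S'}$, using that both states are normalized. Hence $\mathrm{Re}\braket{S}{S'}\le 1-\frac{z}{2}$, and therefore $|\braket{S}{S'}|^2\le \left(\mathrm{Re}\braket{S}{S'}\right)^2 + \left(\mathrm{Im}\braket{S}{S'}\right)^2$ — this inequality is not immediately what I want, so instead I would argue via the real part directly: since $|\braket{S}{S'}|\le 1$ always, and the \textsc{swap} test rejection probability is $\frac12(1-|\braket{S}{S'}|^2)$, I need $|\braket{S}{S'}|^2$ small, not just its real part. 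The clean fix is to first multiply $\ket{S'}$ by a global phase so that $\braket{S}{S'}$ is real and nonnegative; a global phase changes neither $\norm{\ket{\Delta_S}}$ (it only changes each $\ket{\delta_i}$ by that phase, leaving $\braket{\delta_i}{\delta_i}$ fixed) nor the \textsc{swap} test acceptance probability. After this normalization $\braket{S}{S'}=|\braket{S}{S'}|\ge 0$, so $\norm{\ket{\Delta_S}}^2 = 2-2|\braket{S}{S'}|\ge z$ gives $|\braket{S}{S'}|\le 1-\frac z2$, whence $|\braket{S}{S'}|^2\le 1-\frac z2$ as well, and the \textsc{swap~S} test rejects with probability at least $\frac12\cdot\frac z2=\frac z4=r_4$.

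The only subtlety — and the one place to be slightly careful rather than the "main obstacle," since the argument is genuinely short — is the bookkeeping around the global phase: one must note that $\ket{\delta_i}=a_i\ket{\psi_i}-a'_i\ket{\psi'_i}$ is defined relative to the original $\ket{S'}$, but replacing $\ket{S'}$ by $e^{i\varphi}\ket{S'}$ replaces every $\ket{\delta_i}$ by $a_i\ket{\psi_i}-e^{i\varphi}a'_i\ket{\psi'_i}$, and one should check that the hypothesis "$\exists k:\braket{\delta_k}{\delta_k}\ge z$" is used only through $\norm{\ket{\Delta_S}}^2\ge z$, which is phase-independent; alternatively, one can avoid phases entirely by observing $1-|\braket{S}{S'}|^2 \ge 1-|\braket{S}{S'}| \ge \frac12\left(1-|\braket{S}{S'}|^2\right)$ is false, so the phase argument (or equivalently replacing $\braket{S}{S'}$ by $|\braket{S}{S'}|$ and using $\min_\varphi\norm{\ket{S}-e^{i\varphi}\ket{S'}}^2 = 2-2|\braket{S}{S'}|\le \norm{\ket{\Delta_S}}^2$) is the right route. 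I expect the whole proof to be three or four lines once this is set up.
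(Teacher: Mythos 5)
Your core argument is exactly the paper's proof: fix the global phase of $\ket{S'}$ so that $\braket{S}{S'}$ is real and nonnegative, write $2-2\braket{S}{S'}=\sum_i\braket{\delta_i}{\delta_i}\geq\braket{\delta_k}{\delta_k}\geq z$, conclude $|\braket{S}{S'}|^2\leq(1-\tfrac{z}{2})^2\leq 1-\tfrac{z}{2}$, and read off the rejection probability $\tfrac{z}{4}$. One caveat on your side remarks: the claims that a global phase on $\ket{S'}$ leaves each $\braket{\delta_i}{\delta_i}$ and hence $\norm{\ket{\Delta_S}}^2$ unchanged are false (only the $a_i'\ket{\psi_i'}$ half of $\ket{\delta_i}$ picks up the phase, and $\norm{\ket{S}-e^{i\varphi}\ket{S'}}^2=2-2\,\mathrm{Re}(e^{-i\varphi}\braket{S}{S'})$ does depend on $\varphi$), so the hypothesis must be read with the phase convention already fixed — which is the same implicit convention the paper's ``without loss of generality'' adopts, so your main derivation stands as written.
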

\begin{proof}
Without loss of generality, we can assume the phase of $\ket{S'}$ is such that $\braket{S}{S'}\in\RR$, as $\ket{S}$ and $\ket{S'}$ come from two unentangled provers. This lets us write 
\begin{align}
\braket{S}{S'}&=1-\frac{1}{2}\braket{\Delta_S}{\Delta_S}
=1-\frac{1}{2}\sum_{i}\braket{\delta_i}{\delta_i}\leq 1-\frac{1}{2}\braket{\delta_k}{\delta_k}\leq 1-\frac{z}{2},
\end{align}
instead of having to deal with absolute values or real/imaginary parts.
This translates to 
$|\braket{S}{S'}|^2\leq(1-\frac{z}{2})^2 \leq 1-\frac{z}{2}$ and 
the probability to fail Test 4 (the \textsc{swap~S} test) 
$\frac{1}{2}\left(1-|\braket{S}{S'}|^2\right) \geq \frac{z}{4}$.
\end{proof}

We will later choose $z$ to be a small number \eqref{tchoice}.
Similarly to the previous tests, we will demand that combined with the probability $p_4$ to run Test 4, the probability to detect a cheating Merlin is at least $1-s' = p_4 r_4$, or we get the guarantee that for all $i$, $\braket{\delta_i}{\delta_i} <z$, with $z$ chosen in Lemma~\ref{lem:TS} \eqref{tchoice}.

Could we continue with something similar to the \textsc{uniform} test?
The size of the state space for the $\ket{\psi_i}$'s is too large, 
and we don't know enough about the states to ensure a reasonable chance of success for the projection onto a uniform superposition. 
Instead, we will use the state $\ket{U}$ to probabilistically apply the unitary $W$ \eqref{Wunitary} to the state $\ket{S}$ and compare it with $\ket{S'}$.
This \textsc{sequence} test (Test 5) checks whether $\ket{S}$ and $\ket{S'}$ contain
a balanced enough superposition corresponding to a cyclical sequence of states connected by the 1- and 2- local gates $U_i$.

Let us look at the probabilistic procedure
described in detail in the definition of Test 5.
We apply the gates from $\ket{U}$ to $\ket{S}$, 
project onto an uniform superposition in the gate register, drop it, 
project onto identical label registers, uncompute and drop one of them, and shift the remaining label up by one. This should prepare
\begin{align}
	\ket{T} = \frac{1}{\sqrt{2m}}\sum_{i} \ket{i+1}U_i\ket{\psi_i},
	\label{Tshift}
\end{align}
which we want to \textsc{swap} test with the state $\ket{S'}$. 
However, we need to deal with dishonest Merlins. We know that if the previous tests pass with high enough probability, the unitaries are pretty uniformly encoded, and pretty well defined. Let us now prove a series of Lemmas:  if further tests are very likely to pass, the projections in the cycle consistency test will succeed with reasonable probability, the state $\ket{T'}$ we get in reality is close to the expected state $\ket{T}$ \eqref{Tshift}, and the final \textsc{swap} test in Test 5 (\textsc{sequence}) is strong enough to guarantee proper form of the cyclical sequence, connected by 1- and 2- qubit gates.

\begin{lemma}[Probabilistic gate application]
\label{lem:probgate}
	Let us assume all previous tests (\textsc{swap~U}, \textsc{unique}, \textsc{uniform}, \textsc{swap~S}) would fail
	with respective probabilities below $r_1,\dots,r_4$, as listed in Figure~\ref{tab:prob}. Consider the above procedure that starts with $\ket{U}\ket{S}$, applies the gates from $\ket{U}$ to $\ket{S}$, projects onto the uniform superposition in the gate register, and projects onto identical labels. The joint probability of success for the projections is at least $\frac{1}{8mG}$.
	Moreover, after dropping the extra registers and shifting the label register, the resulting state $\ket{T'}$ is close to the state 
	$\ket{T_a} = \sum_{i} a_i \ket{i+1}U_i\ket{\psi_i}$,
	with the coefficients $\xi_j$ in  
	$\ket{T'}-\ket{T_a} = \sum_{j} \xi_j a_j \ket{j+1} \ket{\xi_j}$
	obeying $|\xi_j|^2\leq \frac{848 G}{tm}$.
\end{lemma}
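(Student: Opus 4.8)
The plan is to track the state through each of the four sub-steps of Test 5 --- the gate application, the projection onto $\ket{\bar g}$, the projection onto identical labels, and the label shift --- starting from the guaranteed form \eqref{Ugood}--\eqref{Ugood2} of $\ket{U}$ and the guarantee $\braket{\delta_i}{\delta_i}<z$ from Lemma~\ref{lem:Sswap}. First I would write $\ket{U}$ as its ``ideal'' part $\frac{1}{\sqrt{2m}}\sum_i \ket{i}\ket{u_i}$ plus the small error term with $|\theta_i|^2\le \frac{53}{tm}$, and $\ket{S}=\sum_j a_j\ket{j}\ket{\psi_j}$ with each $|a_j|^2$ not necessarily uniform but with the coefficients controlled via the \textsc{swap~S} consistency guarantee. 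Applying the gates from the gate register to the data register of $\ket{S}$ produces $\sum_i \frac{1}{\sqrt{2m}}\ket{i}\ket{u_i}\sum_j a_j\ket{j}U_i\ket{\psi_j}$ up to an error of norm $O(\sqrt{\sum_i|\theta_i|^2})=O(\sqrt{m/(tm)})=O(1/\sqrt t)$, since applying unitaries does not change norms and the error term of $\ket{U}$ carries total weight $\sum_i |\theta_i|^2/(2m)\cdot 2m \le 53/t$.

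Next I would lower-bound the success probability of the two projections. Projecting the gate register onto $\ket{\bar g}=F_G\ket 0$: on the ideal part each $\ket{u_i}$ is a single computational basis state, so $\braket{\bar g}{u_i}=1/\sqrt G$ exactly, giving a contribution $\frac1G$ to the probability from each label; summing over $i$ (all $2m$ of them, since by Lemma~\ref{lem:bigA} all $|\alpha_i|^2$ are close to $\frac1{2m}$) yields probability $\ge \frac1G(1-O(1/(tm)))$ for this projection. Conditioned on that, the data register is entangled with the (now frozen) label $j$ of $\ket{S}$ and carries a second copy of the label $i$ from $\ket{U}$; projecting onto $i=j$ succeeds with probability $\ge \frac{1}{2m}(1-O(1/(tm)))$, because the post-$\ket{\bar g}$ state is, up to small error, $\frac{1}{\sqrt{2m}}\sum_i \ket{i}\sum_j a_j\ket{j}U_i\ket{\psi_j}$ and the diagonal $i=j$ terms carry weight $\sum_i \frac{1}{2m}|a_i|^2 = \frac{1}{2m}$. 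Multiplying, the joint success probability is at least $\frac{1}{2mG}(1-O(1/(tm)))\ge \frac{1}{8mG}$ for $t,m$ large. The constant $8$ absorbs both the $(1-O(1/(tm)))$ factors and the $O(1/\sqrt t)$ error incurred at the gate-application step, which contributes at most $O(1/\sqrt t)$ in amplitude and hence, after dividing by the $\sqrt{1/(2mG)}$ renormalization, can shrink the probability by at most a bounded constant factor --- this is the one place I must be careful.

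Finally, for the closeness claim: after the two successful projections the (unnormalized) state is $\frac{1}{\sqrt{2m}}\sum_i \ket{i}\ket{i}U_i\ket{\psi_i}$ on the ideal part, plus errors. Uncomputing and dropping the second label register, then shifting $i\mapsto i+1$, gives exactly $\ket{T_a}=\frac{1}{\sqrt{2m}}\sum_i a_i\ket{i+1}U_i\ket{\psi_i}$ wait --- more precisely the ideal target is $\ket{T_a}=\sum_i a_i\ket{i+1}U_i\ket{\psi_i}$ once we re-insert the true amplitudes $a_i$ of $\ket{S}$; the discrepancy $\ket{T'}-\ket{T_a}$ collects (i) the gate-register error term of $\ket{U}$, contributing total squared weight $\le 53/t$ before renormalization, hence $\le 53/t \cdot 2mG \cdot$(per-term) after dividing by the success probability $\ge \frac1{8mG}$; and (ii) cross terms from the label projection where the error part of $\ket{U}$ had a label mismatched with $\ket{S}$. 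Tracking the renormalization factor $1/p_{\text{succ}} \le 8mG$ carefully, the per-label coefficients $\xi_j$ satisfy $|\xi_j|^2 \le (\text{const})\cdot G/(tm)$; chasing the constants through \eqref{Ugood2} and the $\frac{1}{8mG}$ bound gives the claimed $|\xi_j|^2 \le \frac{848G}{tm}$. The main obstacle is exactly this bookkeeping: keeping the error terms of $\ket{U}$ (which live in an orthogonal subspace but get spread over all $2m$ labels) separate from the ideal part through two non-unitary projections and one renormalization, and verifying that the renormalization blow-up of $O(mG)$ against an error of $O(1/t)$ still leaves $|\xi_j|^2 = O(G/(tm))$ rather than something larger --- this works only because the error weight $53/t$ is shared among $2m$ labels, so each $|\xi_j a_j|^2$ before renormalization is $O(1/(tm^2))$ and after is $O(G/(tm))$.
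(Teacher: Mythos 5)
Your proposal follows the same route as the paper: decompose $\ket{U}$ into its ideal part plus the error term controlled by \eqref{Ugood}--\eqref{Ugood2}, push the state through the gate application, the two projections, and the shift, lower-bound the projection probabilities via the ideal part, and control the growth of the error. The structure and the $\frac{1}{2G}\cdot\frac{1}{4m}=\frac{1}{8mG}$ probability bound are right, and you correctly flag the error bookkeeping as the delicate step.

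However, the bookkeeping as you sketch it does not quite deliver the claimed per-label bound. Your accounting --- ``per-label error $O(1/(tm^2))$ before renormalization, divided by $p_{\mathrm{succ}}\geq \frac{1}{8mG}$, hence $O(G/(tm))$ after'' --- computes $|\xi_j a_j|^2$, not $|\xi_j|^2$; since $|a_j|^2=\Theta(1/m)$ this yields $|\xi_j|^2=O(G/t)$, a factor of $m$ weaker than the claimed $\frac{848G}{tm}$. The point you are missing is that one should never ``divide the error by the success probability'': the global renormalization multiplies the ideal and error components \emph{within each label sector} by the same constant, so it cannot change their ratio. What the paper tracks is exactly this per-label \emph{relative} error. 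At the gate projection the ideal component picks up amplitude $1/\sqrt{G}$ while the error component can have overlap up to $1$ with $\ket{\bar g}$ and can interfere destructively, so the relative squared error grows from $|\theta_{\max}|^2\leq\frac{53}{tm}$ to at most $4G|\theta_{\max}|^2\leq\frac{212G}{tm}$; the label projection treats ideal and error identically up to another worst-case interference factor of $4$, giving $16G|\theta_{\max}|^2\leq\frac{848G}{tm}$. So the constants $53\to 212\to 848$ come from one factor of $G$ (gate-register mismatch) and two factors of $4$ (destructive interference at each projection), not from the $8mG$ renormalization. With that correction your argument matches the paper's; without it, the weaker $O(G/t)$ bound would force a larger choice of $t$ in \eqref{tchoice} and degrade the downstream constants in Lemma~\ref{lem:TS}, though it would not break the theorem.
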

Note that the parameter $t$ is still free, we set it later in \eqref{tchoice}.
\begin{proof}
Assuming the previously discussed tests would pass with high enough probability
allows us to use the previous Lemmas. In particular, the state $\ket{U}$ must obey \eqref{Ugood} and \eqref{Ugood2}. 

Let us follow the procedure for Test 5 from Section~\ref{sec:tests}. We apply the gates encoded in $\ket{U}$ to the second register state of the state $\ket{S}$ and obtain
\begin{align}
	\frac{1}{\sqrt{2m}} \sum_{i=1}^{2m} \ket{i}\ket{u_i} \sum_{j} a_j \ket{j} U_i \ket{\psi_j} 
	+ \frac{1}{\sqrt{2m}} \sum_{i=1}^{2m} \theta_i 
	\ket{i}\ket{\theta_i}	\sum_{j} a_j \ket{j} \Theta_i \ket{\psi_j}, 
\end{align}
where $U_i$ are the gates described by the computational basis states $\ket{u_i}$, and the prefactor in the error term obeys $|\theta_i|^2 \leq \frac{53}{tm}$.

Note that we here assume perfect application of the gates $U_i$.
This is possible, if they come from a specific universal gate set under our control.
The GSCON problem remains $\QCMA$ complete also under this assumption (as $\QCMA$ verification circuits can come from a specific universal gate set). 
On the other hand, what if we only have access to a smaller universal gate set? We would then have to decompose the $U_i$'s into this set (on the fly), and would get a small error along the way. However, this error can be controlled to whatever inverse polynomial in $n$ we require, so we do not need to consider it here.

Let us now apply the projection of the gate-register onto the uniform-superposition $\ket{\bar{g}}$, and renormalize the state. 
We know that for basis states $\ket{u_i}$, we have
$\braket{\bar{g}}{u_i}=1/\sqrt{G}$, while the second (error) term will increase in importance the most if we assume $\braket{\bar{g}}{z_i}=1$ 
and $\bra{\psi_{j}} \Theta_i^\dagger U_i\ket{\psi_j}=-1$. The norm squared of the state after this projection is at least
\begin{align}
	N^2_{\bar{g}} \geq  2m \frac{1}{2m} \left(\frac{1}{\sqrt{G}}-\theta_{\textrm{max}}\right)^2
	 = \frac{1}{G} \left(1-\theta_{\textrm{max}}\sqrt{G}\right)^2,
\end{align}
which then translates into a normalized state
\begin{align}
	\frac{1}{\sqrt{2m}} \sum_{i=1}^{2m} \ket{i} \sum_{j} a_j \ket{j} U_i \ket{\psi_j} 
	+ \frac{1}{\sqrt{2m}} \sum_{i=1}^{2m} \nu_i \ket{i}
	\ket{\nu_i},
\end{align}
where $|\nu_i|^2 \leq 4 G |\theta_{\textrm{max}}|^2
\leq \frac{212 G}{tm}$, after dropping the gate register, which is in the state $\ket{\bar{g}}$.
We also note that the probability of a successful projection is not smaller than $\frac{1}{2G}$.

Next, we can perform a projection onto identical labels $i=j$.
For a fixed $j$, the probability of this happening is $\frac{1}{2m}$
in the first part of the state,
and $\frac{|\nu_j|^2}{2m}$ in the second part of the state.
Even if the two parts of the state were not orthogonal for a fixed $j$,
we have a guarantee that the probability of a proper projection 
is (for each $j$)
\begin{align}
	\left|p_{i=j} - \frac{1}{2m}\right| \leq \frac{4|\nu_{\textrm{max}}|^2}{2m},
\end{align}
i.e. the norm squared is guaranteed to be within $\frac{4|\nu_{\textrm{max}}|^2}{2m}$ of $\frac{1}{2m}$. After normalization, this translates to a new state
\begin{align}
	\sum_{j=1}^{2m} a_j \ket{j}\ket{j} U_j \ket{\psi_j} 
	+ \sum_{j=1}^{2m} \xi_j a_j \ket{j}\ket{j} \ket{\xi_{j}}, 
\end{align}
with $|\xi_j|^2 \leq 4|\nu_{\textrm{max}}|^2 \leq 
16 G |\theta_\textrm{max}|^2 \leq \frac{848 G}{tm}$.
Let us note that the probability of this successful projection is surely not smaller than $\frac{1}{4m}$. Overall, the probability of passing both projections successfuly is surely no smaller than $\frac{1}{8Gm}$.

Uncomputing and dropping one of the label registers is then simple.
We also shift the remaining label register up by one.
All in all, with probability at least $\frac{1}{8mG}$,
the procedure described above results in the state
\begin{align}
	\sum_{j=1}^{2m} a_j \ket{j+1} U_j \ket{\psi_j} 
	+ \sum_{j=1}^{2m} \xi_j a_j \ket{j+1} \ket{\xi_{j}}, 
\end{align}
with normalized states $\ket{\xi_j}$, and a guarantee $|\xi_j|^2 \leq \frac{848 G}{tm}$, as claimed in the Lemma.
\end{proof}

Therefore, when Tests 1-5
are likely to pass (as described in the conditions of the previous Lemmas), the state $\ket{U}\ket{S}$ after a succesful transformation, projection, label dropping and shift can be written as
\begin{align}
	\ket{T'} = \sum_j a_j \ket{j+1}  \left( U_j \ket{\psi_j} + \xi_j \ket{\xi_j}\right),
	\label{Tguarantee}
\end{align}
with a guarantee $|\xi_j|^2 \leq \frac{848 G}{tm}$ on the error terms.
With this in mind, we can turn to the last step in Test 5: the \textsc{swap} test between $\ket{T'}$ and $\ket{S'}$.
The goal of the next Lemma is to show that if this \textsc{swap} is likely to pass, the states $\ket{S}$ and $\ket{S'}$ must encode a reasonably uniform superposition of states -- the whole sequence of low-energy states connected by the gates $U_i$.

Note that we don't (need to) verify that the sequence of unitaries $U_i$ actually computes and uncomputes the transformation from $\ket{\psi_1}$ to $\ket{\psi_{m+1}}$. We only check if the whole sequence is cyclically invariant under the transformation \eqref{Wunitary}, i.e. that $\ket{\psi_{j+1}} =U_j\ket{\psi_j}$ and that $U_{2m}\dots U_1 =\ii$.

\begin{lemma}
	\label{lem:TS}
	Assume the previous tests (\textsc{swap~U}, \textsc{unique}, \textsc{uniform}, \textsc{swap~S})
	would fail with respective probabilities below $r_1,\dots,r_4$, listed in Figure~\ref{tab:prob},
	with $t=\frac{848Gm^2}{\mu^2}$ and $z=\frac{\mu^2}{m^3}$.
	If the \textsc{sequence} test rejects the proof with probability
	below $r_5= \frac{\mu^2}{32Gm^4}$,
	for a small $\mu$ to be set later \eqref{muset},
	we claim that the original state $\ket{S}$ obeys
	\begin{align}
		|a_j|^2 &\geq \frac{1}{2m}- 6\mu, 
		\qquad \textrm{and} \qquad
		\norm{\ket{\psi_{j+1}} - U_j U_{j-1} \dots U_2 U_1 \ket{\psi_1}}\leq \frac{6j \mu}{m}. \label{bigAA}
	\end{align}
	Thus also,
	\begin{align}
		\norm{\ket{\psi_{m+1}} - U_m \dots U_1 \ket{\psi_1}}\leq 6 \mu.
		\label{finalstatebound}
	\end{align}
\end{lemma}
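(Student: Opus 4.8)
The plan is to combine the \textsc{swap~S} guarantee on $\ket{S}$ versus $\ket{S'}$ (Lemma~\ref{lem:Sswap}) with the structural description of $\ket{T'}$ from Lemma~\ref{lem:probgate}, using the final \textsc{swap} test of Test~5 as the bridge. First I would argue that if the \textsc{sequence} test rejects with probability below $r_5 = \frac{1}{8mG}\cdot\frac{z}{4}$, then — since Lemma~\ref{lem:probgate} guarantees the two projections inside Test~5 succeed jointly with probability at least $\frac{1}{8mG}$ — the conditional \textsc{swap} test between $\ket{T'}$ and $\ket{S'}$ must reject with probability below $\frac{z}{4}$. Applying the same Fuchs–van de Graaf / distance argument as in Lemma~\ref{lem:Sswap} (now to $\ket{T'}$ and $\ket{S'}$), this yields $\braket{\delta'_k}{\delta'_k} < z$ for every $k$, where $\ket{\delta'_k}$ is the $k$-th block of $\ket{T'}-\ket{S'}$. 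Writing $\ket{T'} = \sum_j a_j\ket{j+1}(U_j\ket{\psi_j}+\xi_j\ket{\xi_j})$ from \eqref{Tguarantee} and $\ket{S'} = \sum_i a'_i\ket{i}\ket{\psi'_i}$, the block at label $j+1$ compares $a_j(U_j\ket{\psi_j}+\xi_j\ket{\xi_j})$ with $a'_{j+1}\ket{\psi'_{j+1}}$.

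Next I would chain three pieces of closeness. From Test~4 (Lemma~\ref{lem:Sswap}) we have $\|a_i\ket{\psi_i}-a'_i\ket{\psi'_i}\|^2 < z$ for all $i$; from the above we have $\|a_j(U_j\ket{\psi_j}+\xi_j\ket{\xi_j}) - a'_{j+1}\ket{\psi'_{j+1}}\|^2 < z$ for all $j$; and from Lemma~\ref{lem:probgate}, $|\xi_j|^2 \le \frac{848G}{tm} = \frac{\mu^2}{m^3}=z$ (this is exactly why $t$ is set to $\frac{848Gm^2}{\mu^2}$). Using the triangle inequality and $|a_j|\le 1$, these combine to give $\|a_j U_j\ket{\psi_j} - a_{j+1}\ket{\psi_{j+1}}\| = O(\sqrt z) = O(\mu/m^{3/2})$ for each $j$. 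Summing the squared discrepancies over all $2m$ blocks and using that $\ket{T'}$ and $\ket{S}$ are both (nearly) normalized forces $\big||a_j|^2 - \frac{1}{2m}\big|$ to be small; bookkeeping the constants should land at $|a_j|^2 \ge \frac{1}{2m}-6\mu$, the first claim of \eqref{bigAA}. Once the $|a_j|^2$ are all $\Theta(1/m)$, dividing through by $|a_j|$ converts $\|a_j U_j\ket{\psi_j}-a_{j+1}\ket{\psi_{j+1}}\| = O(\mu/m^{3/2})$ into $\|U_j\ket{\psi_j}-\ket{\psi_{j+1}}\| = O(\mu/m)$, possibly after absorbing a phase alignment between consecutive blocks (legitimate since everything is a fixed product state). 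A telescoping sum then gives $\|\ket{\psi_{j+1}} - U_jU_{j-1}\cdots U_1\ket{\psi_1}\| \le \frac{6j\mu}{m}$ by induction on $j$, and setting $j=m$ (and recalling $U_{2m}\cdots U_1=\ii$ for honest form, so $U_m\cdots U_1$ is the relevant product) yields \eqref{finalstatebound}.

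The main obstacle I expect is the careful constant-tracking in the per-block-to-global step: one must pass from $2m$ individual $O(\sqrt z)$ bounds to a uniform lower bound $|a_j|^2 \ge \frac{1}{2m}-6\mu$ without the factor-of-$2m$ loss from summing blowing things up, which is precisely why $z$ is chosen as small as $\mu^2/m^3$ (so that $2m\cdot z = 2\mu^2/m^2 \ll \mu$). A secondary subtlety is the phase handling: the \textsc{swap} tests only control $|\langle\cdot|\cdot\rangle|$, so one has to fix the global phase of $\ket{S'}$ (as in Lemma~\ref{lem:Sswap}) and then track that no \emph{relative} phase between the $j$-th and $(j+1)$-th blocks is introduced when we compare $\ket{T'}$ (which has block $j+1$ built from $\ket{\psi_j}$) against $\ket{S'}$ (block $j+1$ holding $\ket{\psi'_{j+1}}$) and against $\ket{S}$; since all three states are unentangled product states this is manageable but must be stated explicitly. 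Everything else is triangle inequality plus the algebra $(1+x)(1-x)\ge 1-x^2$ type manipulations already used repeatedly above.
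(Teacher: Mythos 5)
Your proposal is correct and follows essentially the same route as the paper's proof: converting the overall rejection bound $r_5$ into a conditional bound $z/4$ on the final \textsc{swap} test via the $\tfrac{1}{8mG}$ projection-success probability from Lemma~\ref{lem:probgate}, extracting per-block bounds $\norm{\ket{y_j}}^2\leq z$, chaining them with the \textsc{swap~S} guarantee and the $|\xi_j|^2\leq z$ error to get $\norm{a_jU_j\ket{\psi_j}-a_{j+1}\ket{\psi_{j+1}}}=O(\sqrt{z})$, and then using normalization plus propagation around the $2m$-cycle to pin down $|a_j|^2$ before telescoping. The paper makes your ``summing over blocks'' step concrete by starting from one necessarily large coefficient $|a_k|\geq\tfrac{1}{\sqrt{2m}}$ and iterating $a_{j+1}\geq a_j(1-\sqrt{848G/tm})-2\sqrt{z}$ for $m$ steps, but this is the same mechanism you describe, and you correctly identify both the role of $z=\mu^2/m^3$ in surviving the factor-of-$2m$ chaining loss and the phase-alignment subtlety.
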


Note that we choose $t, z$ here, but tie them to another parameter, $\mu$,
which we later in \eqref{muset} choose as a function of the GSCON problem instance parameters (see Definition~\ref{def:GSCON}).
In particular, we set it so that $6\mu \leq h \leq \frac{\eta_4-\eta_3}{4}$.

\begin{proof}
We assume the 
previous tests would fail with probabilities below the $r_i$'s listed in Figure~\ref{tab:prob}, 
so the previous Lemmas apply. We also assume the \textsc{sequence} test rejects the proof with probability below $r_5 = \frac{\mu^2}{32Gm^4}$.
Thanks to Lemma~\ref{lem:probgate}, we know the probabilistic preparation of $\ket{T'}$ from $\ket{S}$ and $\ket{U}$ according to the description in the \textsc{sequence} test succeeds with probability at least $\frac{1}{8mG}$.
Therefore, the subsequent \textsc{swap} test between $\ket{T'}$	nd $\ket{S'}$ must not reject with probability above $\frac{\mu^2}{4m^3}$. 
Let us unravel what it implies for the state $\ket{S}$.

Recall that $\ket{S'}=\sum_j a'_j \ket{j} \ket{\psi'_j}$. 
Lemma~\ref{lem:Sswap} with a parameter $z$ about a \textsc{swap} test between $\ket{S}$ and $\ket{S'}$,
says that the norm squared of
$\ket{\delta_{j+1}} =a_{j+1}\ket{\psi_{j+1}}-a'_{j+1}\ket{\psi'_{j+1}}$ is below $z$.
Similarly, we can apply the procedure from Lemma~\ref{lem:Sswap}
to a \textsc{swap} test between $\ket{T'}$ and $\ket{S'}$.
Recalling the previous result \eqref{Tguarantee}, we can write
\begin{align}
	\ket{T'} -\ket{S'} = \sum_j \ket{j+1} \Big(
	\underbrace{a_j U_j \ket{\psi_j} + a_j \xi_j \ket{\xi_j} -a'_{j+1}\ket{\psi'_{j+1}}}_{\ket{y_j}}
	\Big).
\end{align}
Thus, if the \textsc{swap} test between $\ket{T'}$ and $\ket{S'}$ succeeds with probability at least $1-\frac{z}{4}$, then for any $j$, we have
$\braket{y_j}{y_j} \leq z$.

Let us combine these facts and use the triangle inequality to derive:
\begin{align}
	\norm{a_j U_j \ket{\psi_j}-a_{j+1}\ket{\psi_{j+1}}}
	\leq \norm{\ket{y_j}}
	+ \norm{a_j \xi_j \ket{\xi_j}}
	+ \norm{a'_{j+1}\ket{\psi'_{j+1}}-a_{j+1}\ket{\psi_{j+1}}}
	\leq 2\sqrt{z} + \sqrt{\kappa_j}, \label{sequencediff}
\end{align}
where $\kappa_j = |a_j \xi_j|^2 \leq \frac{848 G}{tm}|a_j|^2$.
Note that the left side is the smallest for real positive $a_j,a'_j$ and $U_j \ket{\psi_j} = \ket{\psi_{j+1}}$, which can be rewritten as $\left||a_j|-|a_{j+1}|\right| \leq \norm{a_j U_j \ket{\psi_j}-a_{j+1}\ket{\psi_{j+1}}}$.
Therefore, when we take into account what we know about $\xi_j$,
we obtain $\left||a_j|-|a_{j+1}|\right| \leq 2\sqrt{z}+\sqrt{\frac{848 G}{tm}}|a_j|$.
Now, at least one of the $|a_j|$'s has to be at least $\frac{1}{\sqrt{2m}}$, as $\sum_{j=1}^{2m} |a_j|^2=1$. 
Let us see how small could some other $|a_j|$ be, as it must be tied to the neighboring ones by what we proved above. In $m$ steps away from the specific large $a_k$, all of the $a_j$ have to obey (w.l.o.g. assuming positive $|a_j|$ and dropping the absolute values)
\begin{align}
	a_{j+1} \geq a_j \left(1-\sqrt{\frac{848 G}{tm}}\right) - 2\sqrt{z}.
\end{align}
Doing this $m$ times and assuming a large $m$, 
labeling $v=1-\sqrt{848G/tm}$, we get
\begin{align}
	a_{j+m} &\geq a_j v^m - 2\sqrt{z} \left(1+v+v^2+\dots+v^{m-1}\right)
	\geq a_j \left(1- \sqrt{\frac{848 Gm}{t}}\right) - 2 m \sqrt{z}.
	\label{ajalmostdone}
\end{align}
We now choose a small enough $z$ and a large enough $t$:
\begin{align}
	z=\frac{\mu^2}{m^3}, \qquad
	t = \frac{848G}{m z} = \frac{848 Gm^2}{\mu^2}.
	\label{tchoice}
\end{align}
parametrized by a new free parameter $\mu$, which we later \eqref{muset} choose according to the parameters $\eta_2, \eta_3, \eta_4$ from the GSCON problem instance.
For small $\mu$, we have $\sqrt{\frac{848 Gm}{t}} = \frac{\mu}{\sqrt{m}}$, and $2m\sqrt{z} = \frac{2\mu}{\sqrt{m}}$.
When we use it in \eqref{ajalmostdone}, together with $|a_j|\geq \frac{1}{\sqrt{2m}}$, we obtain $|a_{j+m}| \geq \frac{1}{\sqrt{2m}}
\left(1- 6m\mu \right)$. This implies what we wanted to prove for all $i$:
\begin{align}
	|a_i|^2 &\geq \frac{1}{2m}\left(1-12m\mu\right) = \frac{1}{2m}-6\mu,
	\label{Abound}
\end{align}
i.e. all the coefficients $a_i$ have to be very close to $\frac{1}{2m}$ (for small $\mu$), and thus significant.

We can now prove that the state $\ket{S}$ is made from a {\em sequence} of states close to $\ket{\psi_{j+1}} = U_j \dots U_1 \ket{\psi_1}$.
Combining \eqref{sequencediff} with \eqref{Abound},
and using the triangle inequality $j$ times, we get
\begin{align}
	\norm{U_j \dots U_1 \ket{\psi_1}-\ket{\psi_{j+1}}}
	\leq j \sqrt{\frac{2m}{1-12m\mu}} \left(2\sqrt{z} + \max_j\sqrt{ \kappa_j}\right)
	\leq 3j\sqrt{\frac{2m z}{1-12m\mu}}
	\leq \frac{6j\mu}{m},
	\label{distanceUjPsijppUp}
\end{align}
where the upper bound on $\kappa_j$ and $z$ comes from \eqref{tchoice}, 
and we assume $12m\mu \ll 1$.
For $j=m$, this also means the last claim of this Lemma holds: 
$\norm{U_m \dots U_1 \ket{\psi_1}-\ket{\psi_{m+1}}}	\leq 6\mu$.
\end{proof}

The guarantee \eqref{Abound} for the state $\ket{S}$ means we have probability at least $\frac{1}{2m}-6\mu$ to measure any $i$,
when measuring the label register. Thus, we can obtain any $\ket{\psi_i}$ with reasonable probability, and use it to check if the whole sequence in $\ket{S}$ is properly initialized and finalized (for $i=1$ and $i=m+1$, with the \textsc{start} and \textsc{end} test), or to verify that each state in it has a low energy (with the \textsc{low} test). We will do this in the following Sections.

\subsubsection{Initial state and final state tests}
\label{sec:init}

The role of tests 6 (\textsc{start}) and 7 (\textsc{end}) is to check if the sequence $\ket{S}$ (and $\ket{S'}$) is actually relevant to the problem -- that it connects to the two states we want to traverse between in the ground space of the GSCON problem Hamiltonian. 

First, we have the \textsc{start} test. Thanks to Lemma~\ref{lem:TS}, we know there is a probability at least $\frac{1}{2m}-6\mu$ to measure $i=1$ in the label (first) register of $\ket{S}$, giving us $\ket{\psi_1}$ in the data (second) register. When we successfully \textsc{swap} it with the initial state $\ket{\psi}$ from the GSCON instance, we get a guarantee on their closeness.
The \textsc{end} test works analogously, for the $i=m+1$ case, comparing $\ket{\psi_{m+1}}$ with $\ket{\phi}$. However, note that we put much more emphasis on the \textsc{start} test, as we can rely on perfect completeness for collaborating Merlins, while
the \textsc{end} test has some probability of false rejections even for good proofs, thanks to the $\eta_3$ limitation from the problem instance.

We illustrate the following argument in Figure~\ref{fig:statedistance}.
The second claim of Lemma~\ref{lem:TS} guarantees that $\ket{\psi_{m+1}}$ is close to $U_m \dots U_1 \ket{\psi_{1}}$. This, in turn, is close to $U_m\dots U_1 \ket{\psi}$, 
because $\ket{\psi_1}$ is close to $\ket{\psi}$.
Thus, when we measure $i=m+1$ in the label register of $\ket{S}$ and obtain $\ket{\psi_{m+1}}$ in the data register, we can \textsc{swap} test it with the final state $\ket{\phi}$
from the GSCON instance. Again, a high success rate implies closeness of these states.
Combining these results implies that $U_m \dots U_1\ket{\psi}$ is strictly closer than $\eta_4$ to the final GSCON state $\ket{\phi}$.
Let us prove this.

\begin{figure}[!h]
	\begin{center}
		\includegraphics{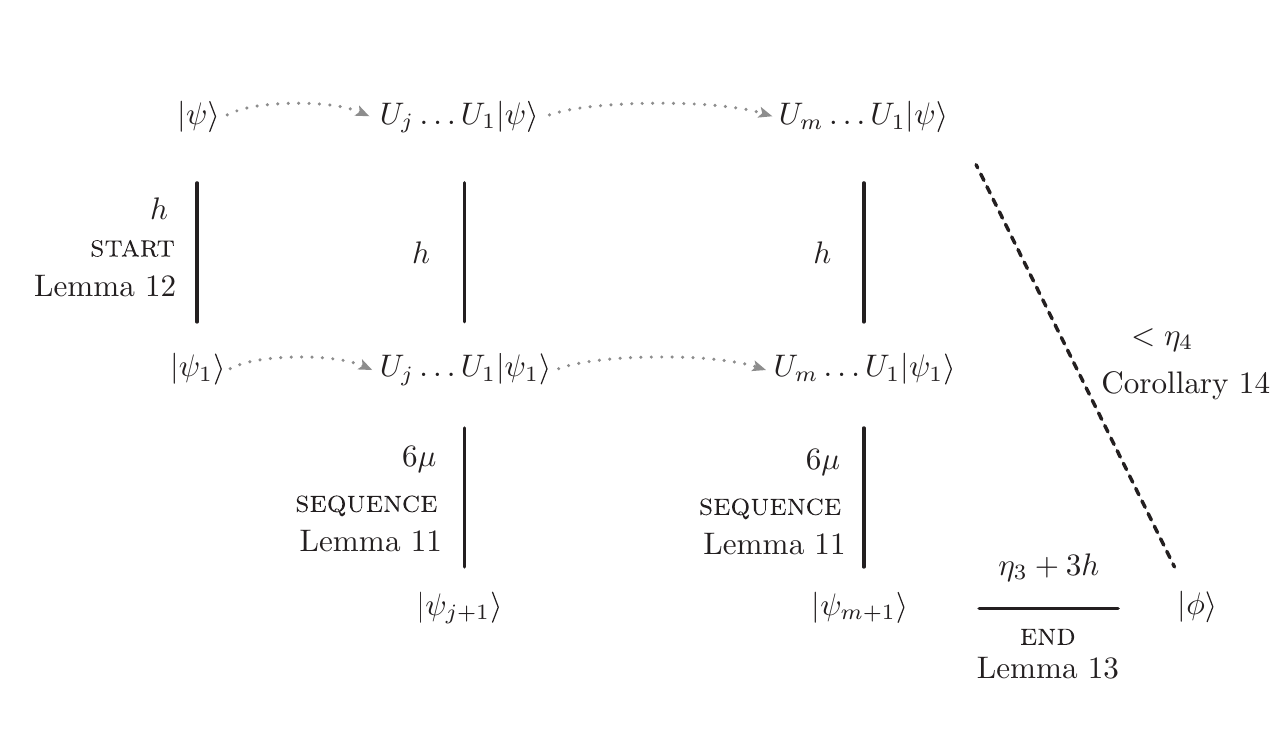}
		\caption{
			\label{fig:statedistance}
			Our goal is to understand the relationship of the state $\ket{\psi}$, its unitary transformations, and $\ket{\phi}$. However, the states that we work with are the $\ket{\psi_j}$'s. Passing tests 1-7 with high probability
			gives us upper bounds on the distance between the states (the black lines). The dashed line is an implication about the maximum distance of $\ket{\phi}$ and $U_m \dots U_1\ket{\psi}$, as 
			$h \leq \frac{1}{4}(\eta_4-\eta_3)$ \eqref{hset} and 
			$6\mu \leq h$ \eqref{muset}.
		Note also that the distance between $\ket{\psi_{j+1}}$ and $U_j \dots U_1 \ket{\psi}$ for all $j$ is not larger than $2h \leq \frac{1}{3}\sqrt{\eta_2/r}$ \eqref{hset}, which will be required for Corollary~\ref{lem:core}.
		}
	\end{center}
\end{figure}

Our goal is to set test 6 (\textsc{start}) up so that if it fails with probability below $p_6 r_6 = 1-s'$, we get a very good guarantee on the closeness of
$\ket{\psi_1}$ and $\ket{\psi}$. We then set test 7 (\textsc{end}) up so that if it fails with probability below $p_7 r_7 = 1-s'$, we get a strong guarantee on the closeness of
$\ket{\psi_{m+1}}$ and $\ket{\phi}$. Combined with the
result on the closeness of $\ket{\psi_{m+1}}$ and $U_m \dots U_1\ket{\psi_{1}}$,
we will thus arrive at a bound on the closeness of $\ket{\phi}$ and $U_m \dots U_1\ket{\psi}$, required to invoke the promise of the GSCON instance. 

On the other hand, in the completeness case, good proofs are rejected with probability at most $1-c' = \frac{p_7}{2m} \left(\frac{\eta_3^2}{2}-\frac{\eta_3^4}{8}\right)$, and we need to make sure that $c'$ is at least an inverse-polynomial in $m$ larger than the soundness bound $s'$, leaving open a completeness-soundness gap.

Before we turn to the tests in more detail, let us look at the \textsc{swap} test one last time to discuss a technical issue -- freedom of phase.
A \textsc{swap} test on $\ket{a}$ and $\ket{b}$ passes with the same probability 
as a \textsc{swap} test on $\ket{a}$ and $\ket{b'}=e^{i\omega}\ket{b}$ for some phase $e^{i\omega}$. 
There exists a phase $e^{i\omega}$such that $\braket{a}{b'}$ is real and nonnegative. Thus, the rejection probability of the \textsc{swap} test is
\begin{align}
	\frac{1-|\braket{a}{b}|^2}{2} = \frac{1-|\braket{a}{b'}|^2}{2} = \frac{1-\left(1-\frac{1}{2}\norm{\ket{a}-\ket{b'}}^2\right)^2}{2} = \frac{w^2}{2}-\frac{w^4}{8},
	\label{pfailswap}
\end{align}
where $w=\norm{\ket{a}-\ket{b'}}$,
as for real and nonnegative $\braket{a}{b'}$ we can write
$\norm{\ket{a}-\ket{b'}}^2 = 2 - 2 \textrm{Re}(\braket{a}{b'})
= 2-2|\braket{a}{b'}|=w$. Note that the maximum value of $w$ is $\sqrt{2}$, when we look at two orthogonal states. We also know that \eqref{pfailswap} is a growing function of $w$ for $0\leq w \leq \sqrt{2}$, as the derivative of \eqref{pfailswap} is $w\left(1-\frac{w^2}{2}\right)$. 

Let us consider test 6 (\textsc{start}).
When the previous tests pass with high enough probability, in the {\em NO} case, the probability to measure $i=1$ in the label register of $\ket{S}$ is at least $\frac{1}{2m}-6\mu$. 
We then perform a \textsc{swap} test between $\ket{\psi_1}$ and $\ket{\psi}$. 
For $\norm{\ket{\psi_1}-\ket{\psi}} = w$ (w.l.o.g. assuming real and nonnegative $\braket{\psi_1}{\psi}$), this test fails with probability 
$\frac{w^2}{2}-\frac{w^4}{8}$.
The provers' best shot at tricking the verifier is to maximize $w$, while keeping the overall probability of test failure below what is asked for in the test.
Formally:

\begin{lemma}[\textsc{start} test soundness]
\label{lem:distfailstart}
	Assume the previous tests would fail with respective probabilities below $r_1,\dots, r_5$ from Figure~\ref{tab:prob}. 
	If test 6 (\textsc{start}) would fail with probability below $r_6 = \left(\frac{1}{2m}-6\mu\right)\frac{h^2}{4}$
	for some $h\leq\sqrt{2}$,
	then 
	the states $\ket{\psi_1}$ and $\ket{\psi}$ are close, i.e. there exists a phase $e^{i\omega_{\psi}}$ such that 
	$\norm{\ket{\psi_1}-e^{i\omega_{\psi}}\ket{\psi}} < h$.
\end{lemma}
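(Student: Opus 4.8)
The plan is to unwind test 6 (\textsc{start}) into its two stages and combine their failure probabilities multiplicatively. Writing the prover's witness as $\ket{S}=\sum_j a_j\ket{j}\ket{\psi_j}$ with normalized $\ket{\psi_j}$, the test first measures the label register in the computational basis; the outcome is $1$ with probability $|a_1|^2$, in which case the data register collapses to $\ket{\psi_1}$, the verifier prepares $\ket{\psi}$ from the instance, and runs a \textsc{swap} test on $\ket{\psi_1}$ and $\ket{\psi}$. Hence the rejection probability of test 6 factors exactly as
\begin{align}
	p_{\textrm{\textsc{start}}}^{\textrm{fail}} = |a_1|^2 \cdot p_{\textrm{\textsc{swap}}}^{\textrm{fail}}(\ket{\psi_1},\ket{\psi}).
\end{align}

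Next I would lower-bound each factor using what is already available. Since tests 1--5 are assumed to fail with probabilities below $r_1,\dots,r_5$, all the preceding Lemmas apply, and in particular Lemma~\ref{lem:TS} gives $|a_1|^2 \geq \frac{1}{2m}-6\mu$, a quantity that is strictly positive for the eventual choice of $\mu$. For the \textsc{swap}-test factor, I invoke the phase-freedom discussion leading to \eqref{pfailswap}: picking the phase $e^{i\omega_\psi}$ that makes $\braket{\psi_1}{\psi}$ real and nonnegative and setting $w=\norm{\ket{\psi_1}-e^{i\omega_\psi}\ket{\psi}}\in[0,\sqrt{2}]$, one has $p_{\textrm{\textsc{swap}}}^{\textrm{fail}}=\frac{w^2}{2}-\frac{w^4}{8}$. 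Combining,
\begin{align}
	p_{\textrm{\textsc{start}}}^{\textrm{fail}} \geq \left(\frac{1}{2m}-6\mu\right)\left(\frac{w^2}{2}-\frac{w^4}{8}\right).
\end{align}

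Finally I apply the hypothesis $p_{\textrm{\textsc{start}}}^{\textrm{fail}} < r_6 = \left(\frac{1}{2m}-6\mu\right)\frac{h^2}{4}$. Dividing by the positive factor $\frac{1}{2m}-6\mu$ gives $\frac{w^2}{2}-\frac{w^4}{8} < \frac{h^2}{4}$. Since $w\leq\sqrt{2}$ we have $1-\frac{w^2}{4}\geq\frac{1}{2}$, so $\frac{w^2}{2}-\frac{w^4}{8}=\frac{w^2}{2}\left(1-\frac{w^2}{4}\right)\geq\frac{w^2}{4}$; chaining the two inequalities yields $\frac{w^2}{4}<\frac{h^2}{4}$, hence $w<h$, which is precisely the claimed bound $\norm{\ket{\psi_1}-e^{i\omega_\psi}\ket{\psi}}<h$.

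The argument is a short chaining of established facts, so there is no real obstacle; the only points needing care are (i) that the events ``label outcome $1$'' and ``\textsc{swap} test rejects'' compose multiplicatively with the correct post-measurement state $\ket{\psi_1}$, (ii) that the genuine external input $|a_1|^2\geq\frac{1}{2m}-6\mu$ from Lemma~\ref{lem:TS} is strictly positive so it may be cancelled, and (iii) the elementary inequality $\frac{w^2}{2}-\frac{w^4}{8}\geq\frac{w^2}{4}$ on $[0,\sqrt{2}]$, which is what converts the \textsc{swap}-test acceptance probability into an honest bound on the Euclidean distance.
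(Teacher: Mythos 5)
Your proof is correct and follows essentially the same route as the paper's: lower-bound the probability of measuring label $1$ via Lemma~\ref{lem:TS}, multiply by the \textsc{swap}-test rejection probability from \eqref{pfailswap}, and use $\frac{w^2}{2}-\frac{w^4}{8}\geq\frac{w^2}{4}$ on $[0,\sqrt{2}]$ to conclude $w<h$. The only difference is that you spell out the factorization of the rejection probability explicitly, which the paper leaves implicit.
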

\begin{proof}	
	As the previous tests would pass with high enough probability,
	Lemma~\ref{lem:TS} guarantees that the probability of measuring $i=1$ in the label register is at least $\frac{1}{2m}-6\mu$. 
	Let us calculate the failure probability of the \textsc{start} test, using the \textsc{swap} test rejection probability \eqref{pfailswap}:
	\begin{align}
		\left(\frac{1}{2m}-6\mu\right)\left(\frac{w^2}{2}-\frac{w^4}{8}\right) \geq \left(\frac{1}{2m}-6\mu\right)\frac{w^2}{4},
	\end{align}
	because $\frac{w^2}{2}-\frac{w^4}{8}\geq\frac{w^2}{4}$ for $w\leq \sqrt{2}$.
	Thus, there must exist a $\omega_{\psi}$ such that 
	\begin{align}
		\norm{\ket{\psi_1}-e^{i\omega_{\psi}}\ket{\psi}} = w < h,
	\end{align}
	in order that the failure probability remains below
	$r_6 = \left(\frac{1}{2m}-6\mu\right)\frac{h^2}{4}$, which we assumed in the Lemma.
\end{proof}

\begin{lemma}[\textsc{end} test soundness]
\label{lem:distfailend}
	Assume tests 1-6 would fail with respective probabilities below $r_1, \dots, r_6$ listed in Figure~\ref{tab:prob}. If test 7 (\textsc{end}) would reject with probability below $r_7 = 
	\left(\frac{1}{2m} - 6\mu\right)
	\left(\frac{(\eta_3+h)^2}{2}-\frac{(\eta_3+h)^4}{8}\right)$,
	such that $\eta_3+h \leq \sqrt{2}$,
	then there exists a phase $e^{i\omega_{\phi}}$ such that $\norm{\ket{\psi_{m+1}}-e^{i\omega_{\phi}}\ket{\phi}} < \eta_3+h$.
\end{lemma}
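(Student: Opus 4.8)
The plan is to run the \textsc{end}-test analogue of the proof of Lemma~\ref{lem:distfailstart}. First I would invoke the hypothesis that tests 1--6 each pass with probability at least $s'$, so that all the earlier Lemmas apply; in particular Lemma~\ref{lem:TS} and its conclusion \eqref{Abound} give $|a_{m+1}|^2 \geq \frac{1}{2m} - 6\mu$ for the state $\ket{S} = \sum_i a_i \ket{i}\ket{\psi_i}$. Hence, measuring the label register in step (a) of the \textsc{end} test yields the outcome $i = m+1$ with probability at least $\frac{1}{2m} - 6\mu > 0$ (positivity from the later choice $6\mu \le h$ in \eqref{muset}), collapsing the data register onto the normalized state $\ket{\psi_{m+1}}$.

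Conditioned on that outcome, step (c) runs a \textsc{swap} test between $\ket{\psi_{m+1}}$ and the freshly prepared target $\ket{\phi}$. Using the phase-freedom discussion around \eqref{pfailswap}, choose $\omega_\phi$ so that $e^{i\omega_\phi}\braket{\psi_{m+1}}{\phi}$ is real and nonnegative and set $w := \norm{\ket{\psi_{m+1}} - e^{i\omega_\phi}\ket{\phi}} \in [0, \sqrt 2]$; then the \textsc{swap} test rejects with probability exactly $g(w) := \frac{w^2}{2} - \frac{w^4}{8}$, so test 7 rejects with probability at least $\left(\frac{1}{2m} - 6\mu\right) g(w)$.

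Finally I would use that $g$ is strictly increasing on $[0, \sqrt 2]$, since $g'(w) = w\left(1 - \frac{w^2}{2}\right) \geq 0$ there. Writing the threshold as $r_7 = \left(\frac{1}{2m} - 6\mu\right) g(\eta_3 + h)$ and using the promise $\eta_3 + h \leq \sqrt 2$, the assumption that test 7 rejects with probability below $r_7$ forces $g(w) < g(\eta_3 + h)$, hence $w < \eta_3 + h$, which is exactly the claimed closeness $\norm{\ket{\psi_{m+1}} - e^{i\omega_\phi}\ket{\phi}} < \eta_3 + h$.

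There is no substantial obstacle here; this is the direct counterpart of Lemma~\ref{lem:distfailstart}, and the only points demanding care are keeping the threshold in the unsimplified form $g(\eta_3 + h)$ rather than the cruder $\frac{1}{4}w^2$ bound used for the \textsc{start} test --- this is needed because in the completeness analysis honest provers only guarantee $\norm{\ket{\psi_{m+1}} - \ket{\phi}} \le \eta_3$, so the extra slack $h$ is precisely what leaves room for the inverse-polynomial completeness--soundness gap --- and checking that the post-measurement data register is exactly $\ket{\psi_{m+1}}$, which is immediate from the product form of $\ket{S}$.
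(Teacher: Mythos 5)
Your proposal is correct and follows essentially the same argument as the paper: invoke Lemma~\ref{lem:TS} to lower-bound the probability of measuring label $m+1$ by $\frac{1}{2m}-6\mu$, then use the phase-adjusted \textsc{swap}-test rejection probability $\frac{w^2}{2}-\frac{w^4}{8}$ and its monotonicity on $[0,\sqrt{2}]$ to force $w<\eta_3+h$ by contradiction. Your side remark about why the threshold must be kept as $g(\eta_3+h)$ rather than the cruder bound used for \textsc{start} correctly anticipates the role of $\gamma$ in the completeness--soundness gap.
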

\begin{proof}	
	As the previous tests pass with high enough probability,
	Lemma~\ref{lem:TS} guarantees that the probability of measuring $i=m+1$ in the label register is at least $\frac{1}{2m}-6\mu$. 
	We observe that if there did not exist a phase $\omega_{\phi}$ such that
	$\norm{\ket{\psi_{m+1}}-e^{i\omega_{\phi}}\ket{\phi}} =w < \eta_3+h$,
	the rejection probability of the \textsc{swap} test \eqref{pfailswap}
	between $\ket{\psi_{m+1}}$ and $\ket{\phi}$
	would be $\frac{w^2}{2}-\frac{w^4}{8}\geq 
	\frac{(\eta_3+h)^2}{2}-\frac{(\eta_3+h)^4}{8}$,
	as this is a growing function of $w$ for $0\leq w\leq \sqrt{2}$.
	Thus, the rejection probability of the \textsc{end} test would be at least $r_7$, a contradiction. Therefore, the claim is true.
\end{proof}

We will choose $h$ as the minimum of two values calculated from the GSCON instance parameters, recalling that $\Delta \leq \min \left\{\eta_4-\eta_3,  \eta_2\right\}$:
\begin{align}
	h = \min \left\{
		\frac{\eta_4-\eta_3}{4}, \frac{1}{6}\sqrt{\frac{\eta_2}{R}}
	\right\}
	 = \min \left\{
		\frac{\Delta}{4}, \frac{1}{6}\sqrt{\frac{\Delta}{R}}
	\right\}.
	\label{hset}
\end{align}
The first value implies 
	$\eta_3+h < \eta_4 \leq \sqrt{2}$ required for Lemma~\ref{lem:distfailend}
	and Corollary~\ref{cor:final}. The second value is required later in Corollary~\ref{lem:core}.
	We then choose 
\begin{align}
	\mu = \frac{h^2}{144m(\eta_3+h)},
	\label{muset}
\end{align}
so that $6\mu \leq h$ (required for Figure~\ref{tab:prob} and Corollary~\ref{cor:final}), $\mu<\frac{1}{36m}$ (required in the proof of Lemma~\ref{lem:lowe}), as well as for a gap result (required in the completeness Section~\ref{sec:completeness}). 
Let us we define a parameter $\gamma$ by
\begin{align}
	\left(\frac{1}{2m} - 6\mu\right)
	\left(\frac{(\eta_3+h)^2}{2}-\frac{(\eta_3+h)^4}{8}\right) =
	\frac{1}{2m} \left(\frac{\eta_3^2}{2}-\frac{\eta_3^4}{8}\right) + \gamma.
\end{align}
We can now prove that
\begin{align}
	\gamma 
	&= \left(\frac{1}{2m} - 6\mu\right)\left(\frac{(\eta_3+h)^2}{2}-\frac{(\eta_3+h)^4}{8}\right)
		- \frac{1}{2m} \left(\frac{\eta_3^2}{2}-\frac{\eta_3^4}{8}\right) \\
	&= \frac{1}{2m} \left(\frac{(\eta_3+h)^2-\eta_3^2}{2}-\frac{(\eta_3+h)^4-\eta_3^4}{8}
		-12\mu m\left(\frac{(\eta_3+h)^2}{2}-\frac{(\eta_3+h)^4}{8}\right)
		\right)\\
	&\geq 	
		\frac{1}{2m} \left(\frac{h(2\eta_3+h)}{2}
			\frac{h}{4}\left(2\sqrt{2}-h\right)
			-6\mu m(\eta_3+h)^2 \right)\label{eq:problem}\\
		&\geq 	
		\frac{h^2(\eta_3+h)}{16m} \left(\sqrt{2} -\frac{48\mu m (\eta_3+h)}{h^2}\right)
	\geq 	
		\frac{h^2(\eta_3+h)}{16m}, 
		\label{gammabound}
\end{align}
labeling $a=\eta_3+h$ and $b=\eta_3$ and utilizing
$a^2-b^2 -\frac{1}{4}(a^4-b^4) = (a-b)(a+b)
\left(1-\frac{a^2+b^2}{4}\right)$,
which is not larger than $\frac{1}{4}(a-b)(a+b)h\left(2\sqrt{2}-h\right)$,
when we realize that $a\leq \sqrt{2}$ and $b\leq \sqrt{2}-h$.

With this in hand, we can show that if the Tests 1-7 pass with high enough probability, the state $\ket{S}$ contains enough information about the state sequence $U_j \dots U_1 \ket{\psi}$, with $U_j$'s from the state $\ket{U}$. 
In other words, the states $U_m \dots U_1 \ket{\psi}$ and $\ket{\phi}$ must be close up to a phase.

\begin{corollary}[GSCON final state condition] Assume tests 1-7 would fail with respective probabilities below $r_1,\dots, r_7$, given in Figure~\ref{tab:prob}. Then 
there is a phase $e^{i\omega_{\phi}}$ such that 
$\norm{U_m\ldots U_1\ket{\psi}-e^{i\omega_\phi}\ket{\phi}}< \eta_3 + 3h 
< \eta_4$.
\label{cor:final}
\end{corollary}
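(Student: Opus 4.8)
The plan is a direct triangle-inequality argument that chains the three endpoint-distance bounds already proved for the sequence encoded in $\ket{S}$. Write $V = U_m \cdots U_1$ for the composite unitary read off from the gate register of $\ket{U}$. Since tests 1--7 are assumed to fail below their respective thresholds, Lemmas~\ref{lem:TS}, \ref{lem:distfailstart} and \ref{lem:distfailend} all apply, giving us: a phase $e^{i\omega_\psi}$ with $\norm{\ket{\psi_1}-e^{i\omega_\psi}\ket{\psi}} < h$ (Lemma~\ref{lem:distfailstart}); the bound $\norm{\ket{\psi_{m+1}} - V\ket{\psi_1}} \le 6\mu$ (eq.~\eqref{finalstatebound} in Lemma~\ref{lem:TS}); and a phase $e^{i\omega_\phi}$ with $\norm{\ket{\psi_{m+1}} - e^{i\omega_\phi}\ket{\phi}} < \eta_3 + h$ (Lemma~\ref{lem:distfailend}).

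The first step is to use unitary invariance of the $2$-norm to push the phase $e^{i\omega_\psi}$ through $V$: $\norm{V(e^{i\omega_\psi}\ket{\psi}) - V\ket{\psi_1}} = \norm{e^{i\omega_\psi}\ket{\psi} - \ket{\psi_1}} < h$. Then, choosing the single output phase $e^{i\omega_\phi'} := e^{i(\omega_\phi - \omega_\psi)}$ and inserting the intermediate vectors $V\ket{\psi_1}$ and $\ket{\psi_{m+1}}$, the triangle inequality yields
\begin{align}
  \norm{V\ket{\psi} - e^{i\omega_\phi'}\ket{\phi}}
  &= \norm{V(e^{i\omega_\psi}\ket{\psi}) - e^{i\omega_\phi}\ket{\phi}} \nonumber \\
  &\le \norm{V(e^{i\omega_\psi}\ket{\psi}) - V\ket{\psi_1}}
     + \norm{V\ket{\psi_1} - \ket{\psi_{m+1}}}
     + \norm{\ket{\psi_{m+1}} - e^{i\omega_\phi}\ket{\phi}} \nonumber \\
  &< h + 6\mu + (\eta_3 + h) \le \eta_3 + 3h , \nonumber
\end{align}
where the last inequality uses $6\mu \le h$, which holds by the choice of $\mu$ in \eqref{muset}.

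The final step closes the bound against $\eta_4$: by the choice \eqref{hset} we have $h \le \tfrac14(\eta_4 - \eta_3)$, so $3h \le \tfrac34(\eta_4-\eta_3) < \eta_4 - \eta_3$ and hence $\eta_3 + 3h < \eta_4$. This is exactly the claimed inequality $\norm{U_m\ldots U_1\ket{\psi}-e^{i\omega_\phi}\ket{\phi}} < \eta_3 + 3h < \eta_4$, which is what is needed to contradict the \emph{NO}-case promise of the ff-GSCON instance in the soundness wrap-up.

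There is no genuinely hard step here; the one thing to be careful about is phase bookkeeping. The phases $e^{i\omega_\psi}$ and $e^{i\omega_\phi}$ supplied by the two \textsc{swap}-based lemmas are a priori unrelated, so one must transport $e^{i\omega_\psi}$ through the unitary $V$ and absorb both phases into a single free output phase $e^{i(\omega_\phi-\omega_\psi)}$, rather than expecting any cancellation. Everything else is the triangle inequality together with the numerical facts $6\mu \le h$ and $4h \le \eta_4-\eta_3$ fixed in \eqref{hset}--\eqref{muset}.
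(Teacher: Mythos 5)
Your proof is correct and follows essentially the same route as the paper: the same triangle inequality through the intermediate vectors $U_m\cdots U_1\ket{\psi_1}$ and $\ket{\psi_{m+1}}$, using the same three bounds ($h$ from \textsc{start}, $6\mu$ from \textsc{sequence}, $\eta_3+h$ from \textsc{end}) and the same numerical facts $6\mu\le h$ and $h\le\tfrac14(\eta_4-\eta_3)$. Your explicit phase bookkeeping (pushing $e^{i\omega_\psi}$ through the unitary and absorbing both phases into one) is a slightly more careful rendering of the paper's ``without loss of generality, we can choose the phases'' remark.
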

\begin{proof}
To show this, we combine the previous results, as illustrated in Figure~\ref{fig:statedistance},
and recall that $h \leq \frac{\eta_4-\eta_3}{4}$ \eqref{hset}.
\begin{enumerate}
\item $\ket{\psi_{j+1}}$ is close to $U_j \dots U_1 \ket{\psi_1}$, thanks to Lemma~\ref{lem:TS} about the \textsc{sequence} test.
\item $\ket{\psi_1}$ is close to $\ket{\psi}$,
 thanks to Lemma~\ref{lem:distfailstart} about the \textsc{start} test.
\item $\ket{\psi_{m+1}}$ is close to $\ket{\phi}$, thanks to Lemma~\ref{lem:distfailend} about the \textsc{end} test.
\end{enumerate}
Combining the previous two results, we realize that $\ket{\psi_{m+1}}$ has high overlap with $U_m \dots U_1 \ket{\psi}$.
In detail, the triangle inequality tells us
that without loss of generality, we can choose the phases of the vectors so that the overlaps are real and nonnegative, and 
\begin{align}
	\norm{U_m\ldots U_1\ket{\psi}-e^{i\omega_{\phi}}\ket{\phi}} &\leq 
		  \norm{U_m\ldots U_1\ket{\psi_1}-\ket{\psi_{m+1}}} \nonumber\\
		&+ \norm{U_m\ldots U_1\ket{\psi}-U_m\ldots U_1\ket{\psi_1}} 
		+ \norm{\ket{\psi_{m+1}}-\ket{\phi}} \nonumber\\
	&< 6\mu + h + (\eta_3+h)  \leq \eta_3 + 3h <  \eta_4,
	\label{guaranteephi}
\end{align}
as guaranteed by the results 1-3 described above and our choice of $\mu$ \eqref{muset}.
\end{proof}

Therefore, we now either have one of the tests 1-7 rejecting with probability at least $r_i$, resulting in overall acceptance at most $s' = 1 - p_i r_i$, 
or a guarantee that the state $\ket{S}$ is very close to a sequence of states
$\ket{j}\ket{\psi_j}$ with $\ket{\psi_j}= U_j \dots U_1 \ket{\psi}$ and $\ket{\psi_{m+1}}=\ket{\phi}$.
However, in the {\em NO} case this is impossible -- so the last test \textsc{low} should reject the proof. We show this in the next Section.

\subsubsection{Low energy testing}
\label{sec:energy}

We run the final test 8 (\textsc{low}) with probability $p_8$.
We will show that if it would pass with probability $\geq s'$, it would mean the states in the sequence $U_j \dots U_1 \ket{\psi}$ 
have energy strictly below $\eta_2$. 
However, thanks to the promise of the GSCON problem, in the {\em NO} case there doesn't exist a sequence of states $U_j \dots U_1 \ket{\psi}$ ending $< \eta_4$ close to $\ket{\phi}$, with all states with energy strictly below $\eta_2$.
This will mean that either the final test rejects with probability at least $r_8$, 
or one of the previous tests must reject with probability at least its $r_i$ (see Figure~\ref{tab:prob}).

We have chosen to analyze the frustration-free variant of GSCON, with positive semidefinite Hamiltonians and $\eta_1=0$, i.e. exactly traversing a frustration-free ground space, because we want to avoid a technical\footnote{
Our proof would also go through for a very small $\eta_1$, or could be avoided with more copies of the proof. However, we haven't found a good enough way of 
perfoming a single measurement of energy for non-frustration-free Hamiltonians,
that would with high enough probability tell if an energy of a single copy of a state (a superposition of eigenstates with various energies) is below or above thresholds that could be very close together.} issue. Since we only have one (two) copies of the witness, it is difficult to perform a precise enough low-energy test, which would not disturb the completeness of the procedure. 

In practice, we measure the label register of $\ket{S}$, obtaining a label $i$. Thanks to \eqref{bigAA}, we know the probability of measuring any is not too small. 
We then measure the energy of the state $\ket{\psi_i}$, and reject or accept depending on the result.
With the guarantees collected so far, we now have a sequence of states 
$U_j \dots U_1 \ket{\psi}$ that ends strictly closer than $\eta_4$ to $\ket{\phi}$ \eqref{guaranteephi}.
Therefore, if we would test the energy of the states $U_j \dots U_1 \ket{\psi}$, at least one state in the sequence must have energy above $\eta_2$.
Now, because the states $U_j \dots U_1 \ket{\psi}$ are close to the states $\ket{\psi_{j+1}}$, testing whether the energy of the $\ket{\psi_{j+1}}$'s 
is low allows us to test if the energy of the $U_j \dots U_1 \ket{\psi}$'s
is low enough.
In detail,
\begin{lemma}[Low energy testing, frustration-free case]
\label{lem:lowe}
Assume tests 1-7 
would fail with respective probabilities below $r_1,\dots, r_7$, listed in Figure~\ref{tab:prob}.
If the final test (\textsc{low}) fails with probability below 
$r_8 = \frac{\eta_2}{8Rm}$,
with $R$ the number of terms in the Hamiltonian of the ff-GSCON instance, then the energy of each state $\ket{\psi_{i}}$
must be below $\frac{\eta_2}{2}$.
\end{lemma}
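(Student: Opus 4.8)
The plan is to prove the contrapositive. Assuming tests~1--7 reject with probability below $r_1,\dots,r_7$, I will show that if even one state in the purported sequence satisfies $\bra{\psi_k}H\ket{\psi_k}\ge \eta_2/2$, then the \textsc{low} test (Test~8) rejects with probability at least $r_8=\frac{\eta_2}{8Rm}$; negating this is exactly the statement of the lemma. Notice that, unlike the downstream soundness argument, this lemma does not invoke the {\em NO}-promise at all — it is purely a statement relating the rejection probability of Test~8 to the energies of the $\ket{\psi_i}$ recorded in $\ket{S}$.

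The only ingredient pulled in from the earlier lemmas is an amplitude bound. Since tests~1--5 reject with probability below $r_1,\dots,r_5$, Lemma~\ref{lem:TS} applies to $\ket{S}=\sum_i a_i\ket{i}\ket{\psi_i}$ (each $\ket{\psi_i}$ normalized) and gives $|a_i|^2\ge \frac{1}{2m}-6\mu$ for every $i$. Using $\mu<\frac{1}{36m}$ from \eqref{muset}, this yields $|a_i|^2\ge\frac{1}{2m}-\frac{1}{6m}=\frac{1}{3m}>\frac{1}{4m}$, so measuring the label register of $\ket{S}$ returns the ``bad'' label $k$ with probability exceeding $\frac{1}{4m}$, leaving the data register in $\ket{\psi_k}$.

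Next I would pin down the ``measure the energy'' step of Test~8, which is exactly where frustration-freeness is used. For a positive-semidefinite, frustration-free Hamiltonian $H=\sum_{j=1}^R H_j$ with $\|H_j\|_\infty\le 1$ we have $0\preceq H_j\preceq\ii$, so $\{H_j,\ii-H_j\}$ is a legitimate two-outcome measurement. The test samples $j\in[R]$ uniformly, applies this measurement to the data register, and treats the single-shot value $R\lambda$ (with $\lambda\in\{0,1\}$ the indicator of the $H_j$ outcome) as the measured energy; since $0<\eta_2/2<R$, ``rejecting when the measured energy exceeds $\eta_2/2$'' is precisely ``rejecting on the $H_j$ outcome''. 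Conditioned on the measured label $i$, the rejection probability is therefore exactly $\frac{1}{R}\sum_{j=1}^R\bra{\psi_i}H_j\ket{\psi_i}=\frac{1}{R}\bra{\psi_i}H\ket{\psi_i}$ — linear in the energy of $\ket{\psi_i}$, and in particular $0$ whenever $\ket{\psi_i}$ is a zero-energy (frustration-free ground) state, which is what keeps completeness intact in the later section.

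Combining the two pieces, if $\bra{\psi_k}H\ket{\psi_k}\ge \eta_2/2$ then
\begin{align}
	\Pr[\,\textsc{low}\text{ rejects}\,]\ \ge\ |a_k|^2\cdot\frac{1}{R}\bra{\psi_k}H\ket{\psi_k}\ \ge\ \frac{1}{4m}\cdot\frac{\eta_2}{2R}\ =\ \frac{\eta_2}{8Rm}\ =\ r_8,
\end{align}
contradicting the hypothesis that \textsc{low} rejects with probability below $r_8$; hence every $\ket{\psi_i}$ has energy strictly below $\eta_2/2$. The arithmetic here is routine once the setup is fixed; the real difficulty is conceptual and lies in the third step — justifying that this weak, single-copy, linear-response energy test is adequate. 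It is the inability to build a sharper single-copy test that reliably separates energies lying near a common threshold (see the footnote in this section) that forces the restriction to ff-GSCON with $\eta_1=0$ and a positive-semidefinite $H$, so that each $H_j$ induces a genuine POVM and the honest ground-space witness is never falsely rejected; this is the main obstacle, and it is handled by the hypotheses of the theorem rather than by a clever argument.
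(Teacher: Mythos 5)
Your proposal is correct and follows essentially the same route as the paper: invoke Lemma~\ref{lem:TS} for the amplitude bound $|a_i|^2\ge\frac{1}{2m}-6\mu>\frac{1}{4m}$, note that the energy-measurement subroutine rejects $\ket{\psi_i}$ with probability exactly $\frac{1}{R}\bra{\psi_i}H\ket{\psi_i}$, and multiply to contradict the assumed bound $r_8=\frac{\eta_2}{8Rm}$. The only cosmetic difference is that you spell out the sample-a-term POVM explicitly where the paper cites the Kitaev--Shen--Vyalyi measurement circuit, and you use the $\mu<\frac{1}{36m}$ bound actually stated at \eqref{muset} rather than the $\frac{1}{24m}$ quoted in the paper's proof.
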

\begin{proof}

Assuming that tests 1-7 
pass with the probabilities denoted in Figure~\ref{tab:prob}, Lemma~\ref{lem:TS} guarantees that the probability of measuring any $i$ in the label register of state $\ket{S}$ is at least $\frac{1}{2m}-6\mu$. 
When we measure the label register of the state $\ket{S}$, we obtain some value $i$ and a state $\ket{\psi_i}$ in the data register.
Using a measurement circuit \cite[p.142-143]{kitaev2002classical} for a 
local Hamiltonian, 
we can now measure the energy of $\ket{\psi_i}$ for our GSCON Hamiltonian $H$ 
made from $r$ positive semidefinite terms with norm at most 1.
We will reject if this circuit outputs 0, which happens with probability $\frac{1}{R}\bra{\psi_i}H\ket{\psi_i}$.

Now, assume the energy of a state $\ket{\psi_i}$ was above $\frac{\eta_2}{2}$,
the rejection probability for the energy measurement circuit 
would be above $\frac{\eta_2}{2R}$.
The rejection probability of the \textsc{low} test would thus be above 
\begin{align}
	\left(\frac{1}{2m}-6\mu\right) \frac{\eta_2}{2R}
	> \frac{\eta_2}{8Rm}, \label{lowfail}
\end{align}
as $\mu < \frac{1}{24m}$ thanks to \eqref{muset}. 
However, this \eqref{lowfail} disagrees with the assumption of the Lemma.
Therefore, the energy of each $\ket{\psi_i}$ must not be above $\frac{\eta_2}{2}$.
\end{proof}

We can now finally show that if test 8 passes with high probability, the energy of each state $U_j \dots U_1 \ket{\psi}$ must be low enough.
\begin{corollary}[Low energy requirement for GSCON]
Assume tests 1-8 would fail with respective probabilities below $r_1, \dots, r_8$, listed in Figure~\ref{tab:prob}.
Then the energy of each state $U_j \dots U_1 \ket{\psi}$ is strictly below $\eta_2$.
\label{lem:core}
\end{corollary}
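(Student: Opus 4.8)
The plan is to chain the three relevant guarantees established earlier and then pass from a closeness-in-norm statement to a closeness-in-energy statement. Since tests $1$–$8$ are assumed to reject with probability below the thresholds of Figure~\ref{tab:prob}, all the earlier soundness lemmas apply. In particular: Lemma~\ref{lem:lowe} gives $\bra{\psi_{j+1}}H\ket{\psi_{j+1}}\le\frac{\eta_2}{2}$ for every $j$; the second part of Lemma~\ref{lem:TS} gives $\norm{\ket{\psi_{j+1}}-U_j\cdots U_1\ket{\psi_1}}\le\frac{6j\mu}{m}\le 6\mu$; and Lemma~\ref{lem:distfailstart} produces a phase $e^{i\omega_\psi}$ with $\norm{\ket{\psi_1}-e^{i\omega_\psi}\ket\psi}<h$. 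First I would use unitarity of $U_j\cdots U_1$ together with the triangle inequality to combine the last two into
\[
\norm{\ket{\psi_{j+1}}-e^{i\omega_\psi}U_j\cdots U_1\ket\psi}\;\le\;6\mu+h
\]
for every $j\in\{1,\dots,m\}$ (the case $j=0$ is trivial, since the state is then $\ket\psi$ with energy $\le\eta_1=0<\eta_2$). Using $\mu=\frac{h^2}{144m(\eta_3+h)}$ from \eqref{muset} one has $6\mu=\frac{h}{24m}\cdot\frac{h}{\eta_3+h}\le\frac{h}{24}$, so the right-hand side is at most $\frac{25}{24}h$, hence at most $\frac{25}{144}\sqrt{\eta_2/R}$ by the second branch of \eqref{hset}.

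The key step is to convert this into an energy bound. Writing $\ket a=U_j\cdots U_1\ket\psi$, $\ket b=\ket{\psi_{j+1}}$ and using $H\succeq0$, so that $\sqrt H$ is well defined, I would apply the triangle inequality for $v\mapsto\norm{\sqrt H\,v}=\sqrt{\bra vH\ket v}$:
\[
\sqrt{\bra aH\ket a}=\norm{\sqrt H\,e^{i\omega_\psi}\ket a}\le\norm{\sqrt H\ket b}+\norm{\sqrt H\,(e^{i\omega_\psi}\ket a-\ket b)}\le\sqrt{\bra bH\ket b}+\sqrt{\opnorm H}\,\norm{e^{i\omega_\psi}\ket a-\ket b}.
\]
Since $\opnorm H\le R$ and $\bra bH\ket b\le\frac{\eta_2}{2}$, plugging in the distance bound gives $\sqrt{\bra aH\ket a}\le\sqrt{\eta_2/2}+\sqrt R\cdot\frac{25}{144}\sqrt{\eta_2/R}=\sqrt{\eta_2}\big(\tfrac{1}{\sqrt{2}}+\tfrac{25}{144}\big)<\sqrt{\eta_2}$, so $\bra aH\ket a<\eta_2$. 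As this holds for every $j$, the energy of each state $U_j\cdots U_1\ket\psi$ is strictly below $\eta_2$, which is the claim.

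The main obstacle I anticipate is the constant bookkeeping in that last step. A naive first-order expansion $\bra aH\ket a=\bra bH\ket b+2\,\mathrm{Re}\bra bH\ket e+\bra eH\ket e$, bounded crudely by $2\opnorm H\norm{\ket e}+\opnorm H\norm{\ket e}^2$, carries a factor $\opnorm H\le R$ against $\norm{\ket e}\sim\sqrt{\eta_2/R}$, i.e. a term of order $\sqrt{R\eta_2}$, which is useless when $R$ is large; one is forced to route the cross term through the $H$-seminorm (equivalently, through $\sqrt H$), for which $\sqrt{\bra eH\ket e}\le\sqrt R\,\norm{\ket e}\sim\sqrt{\eta_2}$ is of the right order. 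Even then one must verify that the resulting prefactor $\tfrac{1}{\sqrt{2}}+O(1)$ stays strictly below $1$: this is exactly why $h$ is pinned at $\tfrac16\sqrt{\eta_2/R}$ and $\mu$ is taken $\ll h$, so that the true distance $6\mu+h$ sits close to $h$ rather than to $2h$, instead of settling for the looser estimate $6\mu\le h$.
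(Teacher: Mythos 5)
Your proof is correct, and while it chains the same three ingredients as the paper (Lemma~\ref{lem:TS} for $\norm{\ket{\psi_{j+1}}-U_j\cdots U_1\ket{\psi_1}}\le 6\mu$, Lemma~\ref{lem:distfailstart} for $\norm{\ket{\psi_1}-e^{i\omega_\psi}\ket{\psi}}<h$, and Lemma~\ref{lem:lowe} for $\bra{\psi_{j+1}}H\ket{\psi_{j+1}}\le\eta_2/2$), it handles the decisive perturbation step genuinely differently. The paper takes the looser distance $\Delta_{\psi_{j+1}}\le 6\mu+h\le 2h$, expands $\bra{\psi}U_1^\dagger\cdots U_j^\dagger H U_j\cdots U_1\ket{\psi}$ as a quadratic form, and bounds the cross terms by $\frac{\eta_2}{2}\cdot 2\Delta_{\psi_{j+1}}$; taken literally that bound would require $\norm{H\ket{\psi_{j+1}}}\le\frac{\eta_2}{2}$, which does not follow from the energy bound $\bra{\psi_{j+1}}H\ket{\psi_{j+1}}\le\frac{\eta_2}{2}$ alone, and the rigorous Cauchy--Schwarz replacement $2\sqrt{\bra{\Delta}H\ket{\Delta}\,\bra{\psi_{j+1}}H\ket{\psi_{j+1}}}\le 2h\sqrt{2R\eta_2}=\frac{\sqrt{2}}{3}\eta_2$ would push the total to about $1.08\,\eta_2$, i.e.\ the paper's constants do not close under the honest estimate. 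Your route through the triangle inequality for the $H$-seminorm $v\mapsto\norm{\sqrt{H}\,v}$, combined with the \emph{tighter} distance $6\mu+h\le\frac{25}{24}h\le\frac{25}{144}\sqrt{\eta_2/R}$, gives $\sqrt{\bra{a}H\ket{a}}\le\big(\frac{1}{\sqrt{2}}+\frac{25}{144}\big)\sqrt{\eta_2}<0.89\sqrt{\eta_2}$ and hence an energy strictly below $\eta_2$ with room to spare. So your ``anticipated obstacle'' is exactly the weak point of the published argument, and your version both proves the corollary and repairs the constant-level gap; the only cost is that you must carry the sharper bound $6\mu\le h/24$ rather than the paper's $6\mu\le h$.
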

\begin{proof}
We already know \eqref{lowfail}. Thanks to Lemma~\ref{lem:TS} about the \textsc{sequence} test and Lemma~\ref{lem:distfailstart} about the \textsc{start} test,
we also know that
\begin{align}
	\Delta_{\psi_{j+1}} = \norm{\ket{\Delta_{\psi_{j+1}}}} \leq 2h,
\end{align}
when we label $\ket{\Delta_{\psi_{j+1}}}=U_j \dots U_1 \ket{\psi} - \ket{\psi_{j+1}}$.
Therefore, if Lemma~\ref{lem:lowe} says the energy of $\ket{\psi_{j+1}}$ is at most $\frac{\eta_2}{2}$,
the energy of the state $U_j \dots U_1 \ket{\psi}$ is
\begin{align}
	\bra{\psi} U_j^\dagger \dots U_1^\dagger  H U_j \dots U_1 \ket{\psi}
	&=
	\bra{\psi_{j+1}} H \ket{\psi_{j+1}}
	+ \bra{\Delta_{\psi_{j+1}}} H \ket{\psi_{j+1}}
	+ \bra{\psi_{j+1}} H \ket{\Delta_{\psi_{j+1}}}
	+ \bra{\Delta_{\psi_{j+1}}} H \ket{\Delta_{\psi_{j+1}}} \nonumber \\
	&\leq \frac{\eta_2}{2} \left( 1+ 2\Delta_{\psi_{j+1}} \right) + \opnorm{H} \Delta_{\psi_{j+1}}^2
	\leq \frac{\eta_2}{2} \left( 1+ 2\Delta_{\psi_{j+1}} \right) + R \Delta_{\psi_{j+1}}^2.
\end{align}
Recall that we chose $h$ \eqref{hset} so that
$\Delta_{\psi_{j+1}} \leq 2h \leq \frac{1}{3}\sqrt{\frac{\eta_2}{R}}$,
which means $2\Delta_{\psi_{j+1}} \leq \frac{2}{3}$
and $R\Delta_{\psi_{j+1}}^2  \leq \frac{\eta_2}{9}$.
The energy of each $U_j \dots U_1 \ket{\psi}$ is thus upper bounded by 
$\frac{17}{18}\eta_2 < \eta_2$. This is strictly below $\eta_2$, as we wanted to show.
\end{proof}

Let us now combine the results we have proven so far. 
The argument from Lemma~\ref{lem:swapU} to Corollary \ref{lem:core} collectively says that if the tests pass with high enough probability,
we get guarantees about the sequence $U_j \dots U_1 \ket{\psi}$.
However, all those guarantees are irreconcilable with the {\em NO} case of the GSCON problem -- there must be at least one state
$U_j \dots U_1 \ket{\psi}$
with energy at least $\eta_2$, if the sequence starts at $\ket{\psi}$ and ends near enough $\ket{\phi}$.
Therefore, at least one of the tests must fail with probability at least the respective rejection threshold $r_i$ listed in Figure~\ref{tab:prob}.
There, we also set the probabilities $p_1,\dots,p_8$ for running the tests, 
so that when we combine them with the desired thresholds for rejection $r_i$,
we get $p_i r_i = 1-s'$ for some $s'$. This is easily achievable for the following set of $p_i$'s:
\begin{align}
	p_i = \frac{r_i^{-1}}{\sum_j r_j^{-1}},
\end{align}
which obey $\sum_i p_i = 1$, with each $p_i$ at least an inverse polynomial in $m$. 
Thus, in the {\em NO} case, the probability to pass the tests for cheating provers is at most $s'=1-\left(\sum_j r_j^{-1}\right)^{-1}$ \eqref{sfromr}.

This concludes the {\em soundness} part of Theorem~\ref{th:main}.
It remains to show {\em completeness} -- the acceptance probability of the protocol in the {\em YES} case must be at least $c'$, which needs to be at least an inverse polynomial in $n$ above $s'$.

\subsection{Completeness}
\label{sec:completeness}

Let us run through how well the tests can run in the {\em YES} case, with honest provers following the protocol, and show a high probability of acceptance if the provers behave honestly.

The states $\ket{U}$ and $\ket{U'}$ are identical, and contain computational-basis encoded unitary gates $U_i$. The first test, \textsc{swap~U}, 
and the second test, \textsc{unique}, thus pass perfectly. 

Let us look at the third test, \textsc{uniform} on a state $\ket{U}$ with a proper form. The probability to pass the gate-register projection is exactly $\frac{1}{G}$. When this passes, the probability to pass the label-register projection is exactly 1. Therefore, this test also passes perfectly for honest provers.

We can turn to the tests for the $\ket{S}$ states. As before, Test 4 (\textsc{swap~S}) passes perfectly. What about Test 5 (\textsc{sequence})?
The probabilistic procedure that applies the gates in $\ket{U}$ to the state $\ket{S}$ works with probability at least $\frac{1}{2m}$. After the shift in the label register, the comparison with the state $\ket{S'}$ passes perfectly.
Altogether, Test 5 passes perfectly again.

The sixth test, \textsc{start}, passes perfectly on a proper witness that has $\ket{\psi_1}=\ket{\psi}$.

The seventh test, \textsc{end}, involves $i=m+1$, where we have a problem instance promise that there exists a state $\ket{\psi}$ for which 
$\ket{\psi_{m+1}}= U_m \dots U_1\ket{ \psi_1}
= U_m \dots U_1\ket{ \psi}$ is at most $\eta_3$ far from $\ket{\phi}$.
Therefore, when we check this with a \textsc{swap} test, there is a chance at most
$\frac{1}{2m}\left(\frac{\eta_3^2}{2}-\frac{\eta_3^4}{8}\right)$
to reject a good witness.

Finally, we chose to look at ff-GSCON instances with $\eta_1=0$, with a frustration-free, positive semidefinite Hamiltonian, all the states $\ket{\psi_j}$ have energy exactly zero for all of the Hamiltonian's terms, so the final test, \textsc{low}, passes perfectly.
If we did not choose this variant of GSCON, the ambiguity in the low-energy testing in the {\em YES} case could reduce the completeness unfavorably. We leave as an open question, whether this requirement can be removed or not.

Altogether, the probability to pass the whole procedure for honest, unentangled provers is at least
\begin{align}
	c' \geq 1-  \frac{p_7}{2m} \left(\frac{\eta_3^2}{2}-\frac{\eta_3^4}{8}\right), 
\end{align}
thanks to the freedom of the final state $\ket{\psi_{m+1}}$ to be a little bit away from the expected final state $\ket{\phi}$ in test 7.

\subsection{The completeness-soundness gap is an inverse polynomial}
\label{sec:csgap}

Let us now compare the completeness bound from Section \ref{sec:completeness} with the probability of acceptance in the {\em NO} case from Section~\ref{sec:soundness}.
We choose the respective test-running probabilities $p_1, \dots, p_8$ according to \eqref{pchoice} 
so that $p_i r_i = 1-s'$, where $r_i$ is the desired maximum rejection probability for a given test, listed in Figure~\ref{tab:prob}. 
It means the maximum acceptance probability in the {\em NO} case is $s'$.
Thanks to $p_7 r_7 = 1-s'$ and recalling \eqref{gammabound}, 
the completeness-soundness gap for our four-unentangled-state protocol thus obeys
\begin{align}
 c'-s' \geq 1-p_7 \left(\frac{\eta_3^2}{2}-\frac{\eta_3^4}{8}\right)- \left(1-p_7 r_7\right)
= p_7 \gamma \geq \frac{p_7 h^2(\eta_3+h)}{16m}.\label{csgaplowerbound}
\end{align}
It is at least an inverse polynomial in $m$ and thus $n$.
This concludes the proof of Theorem~\ref{th:main}.

For those wishing to see what a terrible inverse polynomial it is, let us make an estimate. First, we need a lower bound on $p_7=r_7^{-1}/\sum_{j}r_j^{-1}$.
Looking at Figure~\ref{tab:prob}, we see that the prohibitively dominating term in $\sum_j r_j^{-1}$ is $r_1^{-1}$, and we can upper bound it by $O\left(m^{32} G^{10} \Delta^{-12}\right)$,
where $\Delta$ is an upper bound on $h$ in \eqref{hset}, coming from the GSCON parameters. On the other hand, we can estimate $r_7^{-1}$ to be roughly $m \Delta^{-2}$. Plugging these estimates into \eqref{csgaplowerbound}, we conclude that a lower bound on the final completeness-soundness gap is in $\Omega(\Delta^{13} m^{-32} G^{-10})$.

\section*{Acknowledgements}
DN's research has received funding from the People Programme (Marie Curie Actions) EU's 7th Framework Programme under REA grant agreement No. 609427. This research has been further co-funded by the Slovak Academy of Sciences.
LC and DN were also supported by the Slovak Research and Development Agency grant QETWORK APVV-14-0878.
MS thanks the Alexander-von-Humboldt Foundation for support.

\bibliographystyle{plain}	
\bibliography{qma2proofs}

\end{document}